\newcommand{\ev}[1]{\mathbb{E} \left [ #1 \right ]}
\newcommand{\pr}[1]{\mathbb{P} \left ( #1 \right )}
\newcommand{\prwrt}[2]{\mathbb{P}_{#1} \left [ #2 \right ]}
\newcommand{\one}[1]{\mathds{1}\left [ {#1} \right ]}
\newcommand{\onefunction}{\mathds{1}}
\newcommand{\vsup}{v_\text{sup}}
\newcommand{\vstrong}{v_\text{sup}^\text{s}}
\newcommand{\vweak}{v_\text{sup}^\text{w}}
\newcommand{\mA}{\mathcal{A}}
\newcommand{\mB}{\mathcal{B}}
\newcommand{\mS}{\mathcal{S}}
\newcommand{\mP}{\mathcal{P}}
\newcommand{\mX}{\mathcal{X}}
\newcommand{\mT}{\mathcal{T}}
\newcommand{\mE}{\mathcal{E}}
\newcommand{\mY}{\mathcal{Y}}
\newcommand{\gvalue}[1]{\mathbb{V}\left( #1 \right)} 
\newcommand{\gvalueA}[1]{\mathbb{V}_A\left( #1 \right)} 
\newcommand{\gvalueB}[1]{\mathbb{V}_B\left( #1 \right)} 
\newcommand{\osv}[1]{\mathfrak{u} \! \left ( #1 \right ) } 
\newcommand{\osvsymb}{\mathfrak{u} } 
\DeclareMathOperator*{\cav}{Cav}
\DeclareMathOperator*{\vex}{Vex}
\DeclareMathOperator*{\support}{Supp}
\newtheorem{lem}{Lemma}
\newtheorem{rem}{Remark}
\newtheorem{definition}{Definition}
\newtheorem{thm}{Theorem}
\newtheorem{prop}{Proposition}
\theoremstyle{definition}
\newtheorem{example}{Example}
\title{High Probability Guarantees in Repeated Games: \\ Theory and Applications in Information Theory}
\author{Payam~Delgosha$^*$, Amin~Gohari$^-$~and~Mohammad~Akbarpour$^\dagger$
\\\small$*$ Department of Electrical Engineering and Computer Sciences, University of California, Berkeley
\\\small pdelgosha@eecs.berkeley.edu
\\\small$-$ Department of Electrical Engineering, Sharif University of Technology
\\\small aminzadeh@sharif.edu
\\\small$\dagger$ Becker-Friedman Institute, University of Chicago \normalsize
\\\small akbarpour@uchicago.edu}
\date{}
\begin{document}
\maketitle

\begin{abstract}
We introduce a ``high probability" framework for repeated games with incomplete information. In our non-equilibrium setting, players aim to guarantee a certain payoff with high probability, rather than in expected value. 
We provide a high probability counterpart of the classical result of Mertens and Zamir for the zero-sum repeated games. Any payoff that can be guaranteed with high probability can be guaranteed in expectation, but the reverse is not true. Hence, unlike the average payoff case where the payoff guaranteed by each player is the negative of the payoff by the other player, the two guaranteed payoffs would differ in the high probability framework. 
One motivation for this framework comes from information transmission systems, where it is customary to formulate problems in terms of asymptotically vanishing probability of error. 
An application of our results to a class of compound arbitrarily varying channels is given.
\end{abstract}
\section{Introduction}

The standard game theory framework considers players who are von Neumann-Morgenstern (vNM) utility maximizers; that is, they maximize the expected value of some``utility function" defined over potential outcomes. The key to finding equilibria in such framework, of course, is to know the exact functional form of the utility function in order to translate payoffs and probabilities to utilities. The complexity of the analysis under non-standard functional forms, on the one hand, and the complications of identifying \emph{the} functional forms of the utilities of the real-world players, on the other hand, are two of the challenges of the standard framework.\footnote{That is probably one reason for that risk-neutrality of the players is an standard assumption in many games.}

In this paper, we undertake the above issues by introducing a non-equilibrium solution concept. We develop an analytical framework for (zero-sum) repeated games to study the following question: \emph{What is the highest payoff that players can ``guarantee with high probability?''} More precisely, we are concerned with payoffs that can be guaranteed (with some strategy) with probability $1-\epsilon$, where $\epsilon$ goes to $0$ as the games get played more and more. This ``high probability game theory" setting helps us to derive results analogous to the existing ones  on repeated games with incomplete information by Mertens and Zamir in \cite{RG-MZ71}.

\begin{figure}\begin{center}
\begin{game}{2}{2}[\textbf{Alice}][\textbf{Bob}]
    $\mathbf{s=0}$       & $U$ & $D$ \\
  $R$ & 1        & 2        \\
  $L$ & -4        & -6        
\end{game}
\quad
\begin{game}{2}{2}[\textbf{Alice}][\textbf{Bob}]
      $\mathbf{s=1}$     & $U$ & $D$ \\
  $R$ & 1        & 2        \\
  $L$ & 8        & 8        
\end{game}
\\
\begin{game}{2}{2}[\textbf{Alice}][\textbf{Bob}]
       & $U$ & $D$ \\
  $R$ & 1        & 2        \\
  $L$ & 2        & 1        
\end{game}
\end{center}
\caption[]{(Top)  payoff tables for Alice in state $s\in\{0,1\}$. (Bottom) the average table.}
\label{figcap1}
\end{figure}

Let us motivate our solution concept by a simple, concrete example. Consider the zero-sum repeated game depicted in Fig.~\ref{figcap1} between Alice and Bob. There is a state variable $S$ with uniform distribution over $\{0,1\}$. Alice's payoff table for $s=0$ and $s=1$ are given (Bob's payoff is negative of Alice's payoff). We assume that Alice and Bob have no knowledge of the value of $S$. The game is played  $n$ times between Alice and Bob, with the state variable being drawn at the beginning and kept fixed throughout the $n$ games. Alice and Bob only get to see their payoff values after playing all the $n$ games; hence they cannot gain any information about $S$ throughout the game. We make the assumption that if the total sum of the $n$ payoffs of the $n$ games of a player is positive, that player wins the entire game. There is a draw if the total sum of each player is zero. 

Let us first assume that Alice aims to maximize the expected value of her average payoffs in the $n$ games. Since Alice and Bob do not know $S$, we can compute the average table with weights $p_S(0)=p_S(1)=1/2$ as given in the bottom of Fig.~\ref{figcap1}.  The average table is symmetric and a Nash equilibrium strategy is for players to choose their actions uniformly at random. This gives Alice an expected average payoff of $1.5$. Thus, Alice can guarantee a positive total expected payoff  in the Nash equilibrium of the repeated game. However, with this strategy, Alice's average payoff is negative with probability $1/2$; it is $-7/4$ when $s=0$. Therefore, with probability $1/2$ when $s=0$, she will lose the entire $n$ game as her total sum payoff becomes negative with high probability by the law of large numbers. On the other hand,  assume that Alice plays a different strategy of choosing action $R$ all the time (which is not part of a Nash equilibrium). Then, Bob will play $U$ and this leads to a payoff of $1$ for Alice regardless of whether $s=0$ or $s=1$. The payoff of $1$ is smaller than the average payoff of $1.5$ that an equilibrium strategy will give her, but is guaranteed with probability one; thus ensuring that Alice will win the entire game.  

More generally, given an arbitrary repeated game (with complete or incomplete information), we ask that given an $\epsilon>0$, whether Alice has a strategy, for a sufficiently large enough $n$, that guarantees her total sum payoff to be greater than a number $v$ with probability $1-\epsilon$. In studying this natural problem, one may consider the whole $n$ games as a one stage strategic form game, and then consider the sequence of these games for different values of $n$, as $n$ becomes larger and larger. However, we find it easier to analyze this game as an extensive form repeated game in a high probability framework.

\textbf{Motivation from information theory:} 
One motivation for a high probability framework comes from information theory, where repeated use of a channel and a vanishing probability of error as the number of channel uses, $n$, tends to infinity is common. In the following we explain this via a simple example that requires little background in information theory. 

We need some definitions: a binary erasure channel (BEC) is a communication medium with a binary input $X\in \{0,1\}$. The output of this channel, denoted by random variable $Y$, is a symbol from $\{0,1,\mathsf e\}$ where $\mathsf e$ indicates that the input symbol is erased. When the input symbol is not erased $(Y\neq \mathsf{e})$, we have $Y=X$. The transmitter will not know whether a transmission has been erased at the receiver or not. 

 Let us denote the erasure event by random variable $E$, \emph{i.e.,} $E=1$ indicates that the input bit is \emph{not} erased. When we use the channel $n$ times, we will have erasure random variables $E_1, E_2, \cdots, E_n$ for each transmission. We assume that each $E_i\in\{0,1\}$ is a function of three variables: an internal channel state variable $S$, an input $A_i$ by Alice and an input $B_i$ by Bob, according to $E_i=g_S(A_i,B_i)$, where $g_s(a,b)$ is a given function for any $s\in\mathcal{S}$. Random variable $S$  is randomly chosen at the beginning and is fixed through the $n$ channel uses (slow fading). Alice and Bob have initial partial knowledge about $S$ by having access to $S_A$ and $S_B$ that are correlated with $S$. Figure~\ref{fig:BEC-repeated-game} illustrates this configuration. Alice aims to help the transmission (trying to make $E_i$ variables one, as much as possible) and Bob aims to disrupt it. Neither Alice nor Bob observe the variables $E_i$. But we assume that both Alice and Bob observe each other actions (inputs to the channel) causally; therefore, if they know each other's strategies, each party can infer some information about the other party's side information  by observing their actions. Hence, there is a tradeoff for both parties between using and hiding their side information: using it can be advantageous for the current transmission while actions can reveal information to the other party which could be turned against them in subsequent transmissions.

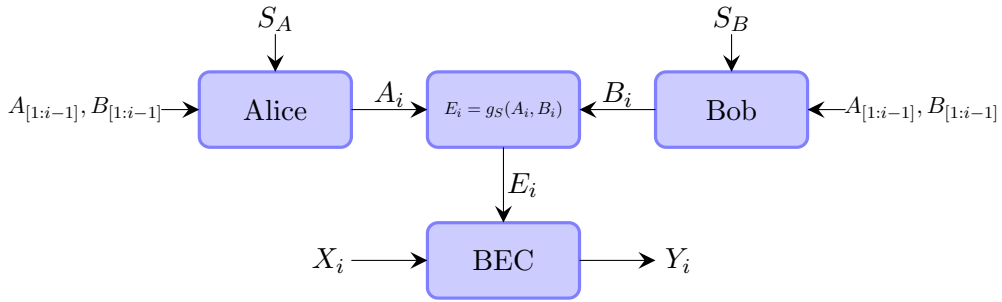
\begin{figure}
  \centering
  
  \begin{tikzpicture}
    \tikzstyle{box}=[rounded corners,fill=blue!20,draw=blue!50,very thick]
    \tikzstyle{bigarrow}=[decoration={markings,mark=at position 0.999 with
      {\arrow[scale=2]{stealth}}}, postaction={decorate}, shorten >=0.4pt]
    \draw[box] (-1,-0.5) rectangle (1,0.5);
    \node at (0,0) {BEC};
    \draw[bigarrow] (-2,0) -- (-1,0);
    \draw[box] (-1,1.5) rectangle (1,2.5);
    \node[scale=0.6] at (0,2) {$E_i=g_S(A_i, B_i)$};
    \draw[bigarrow] (0,1.5) -- (0,0.5);
    \node at (0.25,1) {$E_i$};
    
    \node at (-1.5,2.2) {$A_i$};
    \draw[box] (-4,1.5) rectangle (-2,2.5);
    \node at (-3,2) {Alice};
    \draw[bigarrow] (-2,2) -- (-1,2);
    
    \draw[bigarrow] (-3,3) -- (-3,2.5);
    \node at (-3,3.2) {$S_A$};
    \draw[bigarrow] (-4.5, 2) -- (-4,2);
    \node[scale=0.8] at (-5.5,2) {$A_{[1:i-1]}, B_{[1:i-1]}$};
    
    
    
    \draw[bigarrow] (1,0) -- (2,0);
    \node at (2.3, 0) {$Y_i$};
    \node at (-2.3, 0) {$X_i$};
    
    \draw[box] (4,1.5) rectangle (2,2.5);
    \node at (3,2) {Bob};
    \draw[bigarrow] (4.5, 2) -- (4,2);
    \node[scale=0.8] at (5.5,2) {$A_{[1:i-1]}, B_{[1:i-1]}$};
    \draw[bigarrow] (3,3) -- (3,2.5);
    \node at (3,3.2) {$S_B$};
    \node at (1.5,2.2) {$B_i$};
    \draw[bigarrow] (2,2) -- (1,2);
  \end{tikzpicture}
  \caption{An erasure channel where the erasure variable $E_i$ at time $i$ is produced in a repeated game}
  \label{fig:BEC-repeated-game}
\end{figure}

We can view the above as a game with incomplete information if we consider $E_i$ to be the payoff of the game for Alice (the payoff of Bob will be the negative of the payoff of Alice). Now, suppose that Alice can guarantee the expected total payoff of $n/2$. It may be the case that with probability $1/2$, her total payoff is zero, and with probability $1/2$ her total payoff is $n$. Then, with probability $1/2$ all the transmitted bits will be erased and no communication will be possible. Therefore, having a bound on the expected value of total payoff is not useful. On the other hand, given some small $\epsilon>0$, assume that Alice can guarantee her total payoff to be at least $n/3$ with probability $1-\epsilon$, regardless of how Bob plays. In other words, with probability $1-\epsilon$, at least $n/3$ bits from the $n$ bits that the transmitter sends will become available at the receiver. Then, with probability $1-\epsilon$, the transmitter can send about $n/3$ data bits by employing standard coding techniques such as fountain codes. Therefore, a high probability framework is of relevance to information transmission problem over this adversarial BEC channel.

It is possible to think of other information theory problems with a threshold phenomenon where the high probability framework is of relevance. For instance, in coding theory, the minimum distance of a code gives a guarantee that if the number of changes in a code sequence is sufficiently small, decoding will be successful. One can consider a problem where Alice and Bob are having actions that (along with a channel state) determines when a transmission will become erroneous. It would be desirable for Alice to make sure that the number of errors are bounded to ensure successful decoding. Or for instance, one can imagine a control system with two players, one who is trying to increase the error and the other who is trying to reduce the error. It may be that a bound on the total error of a system be of importance (and not its expected value). 

In Section \ref{compoundAVC}, we provide a technical application of the high probability framework for the problem of communication over a certain compound arbitrarily varying channel.  

\textbf{Our contribution:} In this paper, we focus on repeated games with incomplete information. Incomplete information refers to the fact that there are some unknown parameters that affect the payoff of the players. Each player has its own partial knowledge of the parameters, which may leak to the other player through actions during the repeated game. There is a tradeoff between 
hiding and using the information to each party. We refer the reader to \cite{RG-aumann} for a comprehensive treatment. Our main contribution in this paper is to find payoffs that can be guaranteed with high probability. We introduce a non-equilibrium approach -- the high probability condition -- and characterize payoffs that can satisfy that condition. Just like the average case framework, a complicating aspect of the problem is the tradeoff between 
hiding and using the information in the high probability framework. After proving our main result in the high probability framework, a non-trivial application of this framework to compound arbitrarily varying channel is also given.

There have been few previous works on implicit flow of information through actions in information theory \cite{Grover:EECS-2011-1, Cuff1}. However, none of existing works address implicit communication from the perspective of game theory to characterize the tradeoff between  hiding and using the information. Therefore, there are new conceptual features in our treatment. 

\textbf{Related work:} The literature of repeated game theory contains several ideas that are related to our paper. The standard approach to infinitely repeated games with no discount rate is the closest to ours, but it is concerned with the average payoff as a criteria of equilibrium \cite{MS06}. As we discussed before, our paper in some sense provides a high probability analogous of \cite{RG-MZ71}. Fudenberg and Levine \cite{FL92} study a repeated game of imperfect monitoring where they provide asymptotic bounds for the payoff of the player whose reputation (against his opponent) is crucial in identifying the equilibrium. The robust mechanism design literature is also related to ours, in that the goal is to ``guarantee'' a payoff (in a maxmin sense), but with a focus on single period games -- see, for instance, \cite{C15} and \cite{BM12}.

It should be pointed out that classical game theory has already found many applications in  information theory in scenarios where we have channels with unknown parameters, or channels that can vary arbitrarily (adversarial channels). The payoff function is generally either a mutual information (\emph{e.g.,} \cite{McEliece83,CsNa89,B57,  Dob61,  PDJL03}) or a coding error probability (\emph{e.g.,} \cite{BBT59,Root61}). Other than the problem of channels with uncertainty, game theory is 
vastly being used in other problems of information theory such as adversarial sources, power allocation and spectrum sharing.

\textbf{Organization:} 
The rest of this paper is organized as follows: in Section~\ref{notation} we define our notation. In Section~\ref{sec:game-definition-problem-statement} we formally define the problem, in Section~\ref{sec:expected-asymptotic} we review a result in repeated games with incomplete information in the expected value regime and finally in Section~\ref{sec:vs-vw} we prove our main result which is finding the highest value a party can guarantee with high probability in repeated games with incomplete information. Section \ref{compoundAVC} includes an application of the framework.

\subsection{Notation}
\label{notation}
We use capital letters for random 
variables and small letters for their realizations. We use $[i]$ to denote the set $\{1, 2, \dots, i\}$. Then $X_{[i]}$ denotes $X_1, X_2, \dots, X_i$. We use both subscript and superscripts to denote indicies; \emph{e.g.,} $X^j$ is rv $X$ indexed by $j$, and $X_i^j$ is rv $X$ indexed by $i$ and $j$. Thus, $X^{[k]}_{[n]}=\{(X_i^j: i\in [n], j\in[k]\}$. For a function $f$, $\cav f$ and $\vex f$ denote its lower concave envelope and upper convex envelope, respectively; \emph{e.g.,} $\cav f$ is the smallest concave function that lies above $f$. The support of a probability distribution $p$ over a finite set $\mathcal A$ is defined as $\support(p) = \{a \in \mathcal A: p(a) > 0 \}$.

\section{Definition and Problem Statement}
\label{sec:game-definition-problem-statement}

We consider a two player zero--sum game with Alice and Bob as players. We are interested in Alice's payoffs; hence Alice is the maximizer and Bob is the minimizer. 

\begin{definition}
  \label{def:value-of-game}
We define the value of a strategic game $\Upsilon$, $\gvalue{\Upsilon}$, as Alice's 
payoff in a Nash equilibrium; this value is the same for all Nash equilibriums since the game is zero--sum. We use $\gvalueA{\Upsilon}$ and $\gvalueB{\Upsilon}$ to denote Alice's and Bob's payoffs in any Nash equilibrium respectively. Hence, $\gvalue{\Upsilon} = \gvalueA{\Upsilon}=-\gvalueB{\Upsilon}$.
\end{definition}

A standard zero--sum repeated game of incomplete information consists of the following 
components \cite{RG-MZ71}:
\begin{itemize}
\item A zero--sum two player game $\Gamma$ called the \emph{stage game} which 
is repeated $n$ times. This game is between two players, say Alice and Bob, 
with finite sets of permissible actions $\mA$ and $\mB$, respectively. For each state 
$s \in \mS$, we have a payoff table $g_s$ where $g_s(a,b)$ denotes Alice's 
payoff when Alice plays action $a \in \mA$ and Bob plays action $b \in \mB$ in 
$\Gamma$.
\item A probability distribution $p_S(s)$ on a finite set of states, $\mS$, from 
which the state of the game is chosen by nature at random at the beginning of 
the game. Without loss of generality, we may assume that  $p_S(s)>0$ for all $s$, \emph{i.e.,} $\mS=\support(p)$.
\item This state is fixed throughout the $n$ repetitions of $\Gamma$, but 
neither Alice, nor Bob know the exact value of the state. Instead, Alice and 
Bob receive $S_A$ and $S_B$ as the side information about $S$, respectively. We 
assume that $S_A$ and $S_B$ are functions of $S$, \emph{i.e.,} $S_A = \mathscr T_A(S)$ and $S_B 
= \mathscr T_B(S)$. This assumption is made without loss of generality, as argued later. 
The alphabets of random variables $S_A$ and $S_B$ are denoted by $\mS_A$ and 
$\mS_B$, respectively.
\item Each party plays actions in the repeated game based on the information 
they have since the beginning of the game, \emph{i.e.,} their side informations 
$S_A$ and $S_B$ and the history of the game $A_{[i-1]}$ and $B_{[i-1]}$ 
which are Alice's and Bob's actions up to stage $i$ respectively. 
Note that in stage $k$, Alice and Bob play $A_k$ and $B_k$ simultaneously; here
we have shown 
actions with capital letters to emphasize that they are random variables since 
the two parties are allowed to employ random strategies, and the initial state $S$ is random.
\item We assume that Alice and Bob just observe their actions, not the payoffs 
they have received. When all $n$ stages are finished, Alice receives the time 
average of the payoffs of stage games, \emph{i.e.,}
\begin{equation}
\label{eq:sigma-n}
 \sigma_n := \frac{1}{n} \sum_{i=1}^n g_S(A_i,B_i).
\end{equation}
Note that $\sigma_n$ is a random variable.
\end{itemize}

The repeated game with above components is shown by
$
 \Gamma_n^{\mathscr T_A,\mathscr T_B}(p),
$
where $p$ is the prior distribution on state space $\mathcal{S}$. With an abuse of notation, we alternatively write
$\Gamma_n^{\mathscr T_A,\mathscr T_B}(S)$ where $S$ the random variable with distribution $p$.

A few points should be made about the above definition. First, note that the 
assumption that $S_A$ and $S_B$ are deterministic functions of $S$ is not 
restrictive. In fact, in the general case where $S_A$ and $S_B$ are allowed to 
be random functions of $S$, we can define a random variable $N$ where $S_A$ and 
$S_B$ are deterministic functions of $S$ and $N$ (functional representation lemma \cite[Appendix B]{ElGamalKim}). Therefore, for the new 
repeated game with state $\hat{S} = (S,N)$ and payoff tables $\hat{g}_{(s,n)} = 
g_s$ side informations are of our desired form and also the resulting payoffs 
do not change.

We can consider the strategic form for the above extensive form game and call 
it $\hat{\Gamma}_n^{\mathscr T_A,\mathscr T_B}(S)$. In this strategic form game, each action of a 
player 
is a pure strategy of him in the repeated game, \emph{i.e.,} a collection of 
deterministic functions determining what action should be played at each stage 
given the observations up to that time. The payoff of this game is the expected 
outcome of the repeated game defined as in \eqref{eq:sigma-n} when $S$ is 
generated from distribution $p_S(s)$. This strategic form game is indeed 
zero--sum, hence has a mixed strategy Nash equilibrium with value 
$\gvalue{\hat{\Gamma}_n^{\mathscr T_A,\mathscr T_B}(S)}$. This could be defined rigorously as 
follows:

\begin{definition}
 \label{def:strategic-form}
 The strategic form game $\hat{\Gamma}_n^{\mathscr T_A,\mathscr T_B}(S)$ is defined as a one 
stage zero--sum game with action sets $\hat{\mA}$ for Alice and $\hat{\mB}$ for 
Bob   where
\begin{equation}
 \begin{split}
  \hat{\mA} &= \{(f_1,\dots, f_n) \, | \, f_i: \mA_{[i-1]} \times \mB_{[i-1]} 
\times \mS_A \rightarrow \mA, 1 \leq i \leq n \} \\
\hat{\mB} &= \{(g_1,\dots, g_n) \, | \, g_i: \mA_{[i-1]} \times \mB_{[i-1]} 
\times 
\mS_B \rightarrow \mB, 1 \leq i \leq n \} \\
 \end{split}
\end{equation}
where $f_i(a_{[i-1]}, b_{[i-1]}, s_A)$ determines which action Alice will 
play if the history of the game is $a_{[i-1]}, b_{[i-1]}$ and she has the 
side information $S_A$, and Bob's strategies are similar. Given a realization $S = s$, a unique 
deterministic sequence of actions is played by Alice and Bob, denoted by 
$a_{[n]}(s), b_{[n]}(s)$ where
\begin{equation}
\begin{split}
 a_i(s) &= f_i(a_{[i-1]}(s), b_{[i-1]}(s), \mathscr T_A(s) )  \\
 b_i(s) &= g_i(a_{[i-1]}(s), b_{[i-1]}(s), \mathscr T_B(s) ).  \\
 \end{split}
\end{equation}
The payoff function 
of this game is defined as follows: 
\begin{equation}
\frac{1}{n} \sum_s p_S(s) \sum_{i=1}^n g_s(a_i(s), b_i(s)).
\end{equation}
\end{definition}

As mentioned above, this is a finite zero--sum game, hence has a mixed strategy 
Nash equilibrium. Any strategy of this form is a mixture of pure strategies 
defined above, called a mixed strategy in the repeated game. However, since the 
repeated game $\Gamma_n$ is with perfect recall, \emph{i.e.,}\ each player remembers 
his own past actions, Kuhn's theorem implies that without loss of generality we 
may only consider behavioral strategies (see, \cite{Osborne}, for instance). A behavioral strategy is  a collection 
of random functions assigning probabilities to each action given the history of 
the game at each stage:

\begin{definition}
A behavioral strategy of Alice in the game $\Gamma_n$ is a collection of random 
functions where
\begin{equation}
 \alpha(a_i | a_{[i-1]}, b_{[i-1]}, s_A),
\end{equation}
is the probability that Alice chooses action $a_i$ when the history of the 
game is $a_{[i-1]}, b_{[i-1]}$ and Alice's side information is $s_A$. Bob's behavioral strategies are defined similarly via $\beta(b_i | a_{[i-1]}, b_{[i-1]}, s_B)$. The 
choices of Alice and Bob in different stages are assumed to be conditionally independent given the past action history\footnote{In other words, the players do not use private randomization to make further correlation in their actions}, \emph{i.e.,} the probability distribution on the outcome of the 
game is
\begin{equation}
 p(s,a_{[n]},b_{[n]}) = p_S(s) \prod_{i=1}^n \alpha(a_i | a_{[i-1]}, b_{[i-1]}, 
\mathscr T_A(s)) \beta(b_i | a_{[i-1]}, b_{[i-1]}, 
\mathscr T_B(s)).
\end{equation}

The set of Alice's behavioral strategies in $\Gamma_n$ is denoted by 
$\tilde{\mA}_n$ and Bob's behavioral strategies is denoted by $\tilde{\mB}_n$.
\end{definition}

The value of $\Gamma_n$ is defined as the value of its strategic form. As a result of Kuhn's theorem, we have
\begin{equation}
\label{eq:repeated-equilibrium}
 \gvalue{\Gamma_n^{\mathscr T_A,\mathscr T_B}(S) } = \max_{\alpha \in \tilde{\mA}_n} \min_{\beta 
\in \tilde{\mB}_n} \ev{ \frac{1}{n}\sum_{i=1}^n g_S(A_i, B_i) } = \min_{\beta 
\in \tilde{\mB}_n} \max_{\alpha \in \tilde{\mA}_n} \ev{ \frac{1}{n} \sum_{i=1}^ng_S(A_i, 
B_i) },
\end{equation}
where $A_i$ and $B_i$ are random variables denoting the actions of Alice and Bob.

Let $
 \sigma_n = \frac{1}{n} \sum_{i=1}^n g_S(A_i,B_i).
$ be 
the time average payoff of Alice. Then, equation \eqref{eq:repeated-equilibrium} implies that if Alice plays her 
equilibrium strategy, independent of Bob's strategy, we have
\begin{equation}
 \ev{\sigma_n} \geq \gvalue{\Gamma_n^{\mathscr T_A, \mathscr T_B}(S)},
\end{equation}
which shows that Alice can guarantee $\gvalue{\Gamma_n^{\mathscr T_A,\mathscr T_B}(S)}$ in 
the average sense by playing an equilibrium (behavioral) strategy. Conversely, from \eqref{eq:repeated-equilibrium}, if Bob plays his equilibrium 
strategy, Alice can not guarantee more than the value of the game, \emph{i.e.,}
$\ev{\sigma_n} \leq \gvalue{\Gamma_n^{\mathscr T_A,\mathscr T_B}(S)}$. Hence
$\gvalue{\Gamma_n^{\mathscr T_A,\mathscr T_B}(S)}$ is the maximum value Alice can guarantee in 
the expected value sense. The asymptotic behavior of this value, \emph{i.e.,} $\lim_{n 
\rightarrow \infty} \gvalue { \Gamma_n^{\mathscr T_A,\mathscr T_B}(S) }$ is analyzed by Mertens 
and Zamir in \cite{RG-MZ71}. We will review a special case of this result in 
Section~\ref{sec:expected-asymptotic}.

On the other hand, one might be interested in finding the value Alice can 
guarantee with high probability instead of in average. There are two 
ways of defining this concept.

\begin{definition}
 \label{def:strong-guaranteeing}
 We say that Alice can strongly guarantee a value $v$ if for all $\epsilon>0$, 
there exists a natural number $N$ such that for all $n > N$, Alice has a strategy $\alpha$ in 
$\Gamma_n^{\mathscr T_A,\mathscr T_B}(p_S)$ so that for all strategies $\beta$ of Bob in this 
game we have
\begin{equation}
 \pr{\sigma_n < v} < \epsilon.
\end{equation}
\end{definition}

\begin{definition}
\label{def:weak-guaranteeing}
We say that Alice can weakly guarantee a value $v$ if for all $\epsilon>0$, there 
exists $N$ such that for all $n > N$ and for all strategy $\beta$ for Bob in 
$\Gamma_n^{\mathscr T_A,\mathscr T_B}(p_S)$, there exists a strategy $\alpha$ for Alice in this 
game such that
\begin{equation}
 \pr{\sigma_n < v} < \epsilon.
\end{equation}
\end{definition}

Note that the difference between the above two definitions is that if Alice 
wants to guarantee a payoff strongly, then she needs to have a \emph{universal} 
strategy $\alpha$ independent of Bob's strategy. A universal strategy of Alice should work for all 
possible strategy of Bob. On the other hand, when Alice wants to guarantee a value weakly, 
she can adapt her strategy based on Bob's strategy. Therefore, it is evident 
that if Alice can guarantee a value in the strong sense, she can guarantee it 
in the weak sense too.

\begin{definition}
 \label{def:vsup}
When the game state has distribution $p_S$, Alice's and Bob's side information functions are $\mathscr T_A$ and $\mathscr T_B$, respectively, 
 we denote the supremum of all values Alice can strongly guarantee as 
$\vstrong(p_S, \mathscr T_A, \mathscr T_B)$. Similarly $\vweak(p_S, \mathscr T_A, \mathscr T_B)$ denotes the supremum over all values Alice can 
guarantee weakly. When it is clear from the context, we use $\vstrong$ and $\vweak$ instead as shorthands for $\vweak(p_S, \mathscr T_A, \mathscr T_B)$ and $\vstrong(p_S, \mathscr T_A, \mathscr T_B)$, respectively.
\end{definition}

We will find the values of $\vstrong$ and $\vweak$ in Section~\ref{sec:vs-vw}.

\section{Review of results for the expected value payoff regime}
\label{sec:expected-asymptotic}

In this section, we review an existing result for guaranteeing payoffs in the 
expected value. In this approach,  the 
Nash Equilibrium of the $n$ stage game, $\gvalue{\Gamma_n}$ is asymptotically 
analyzed and its limit value as well as its 
convergence rate is obtained. 

We first need a definition:
\begin{definition}
\label{def:u-average-game}
Given a distribution $p_S$ on set $\mathcal S$ and payoff tables $g_s(a,b)$ for $s\in \mathcal{S}$,  
define $\osv{p_S}$ as the value of the one-stage zero-sum game with the average payoff table $\sum_s p_S(s) 
g_s$. We may also denote it by $\osv{S}$ where $S$ is the random variable 
with distribution $p_S$.
\end{definition}

Consider the special case where one player is fully aware of the game state and the other has no side information. In order to do so we employ the notation $\emptyset$ as the function which gives no side information, \emph{i.e.,} it has a constant output $\emptyset(s)=0$  for all $s \in \mS$. On the other hand, let $\onefunction$ is the side information function which gives full information, \emph{i.e.,} $\onefunction(s) = s$ for all $s \in \mS$. We consider the case where $\mathscr T_A=\emptyset, \mathscr T_B=\onefunction$. Then,

\begin{thm}[Theorem~3.16 in \cite{RG-Zamir1992}]
  \label{thm:zamir-side-information-on-one-side}
  $\lim_{n \rightarrow \infty} \gvalue{\Gamma_n^{\emptyset, \onefunction}(p_S)}$ exists and is equal to $\vex \osv{p_S}$ where $\vex \osvsymb$ is the convex hull of $\osvsymb$ as a function on the probability simplex. Furthermore there exists a constant $C$ such that for all $p_S$ we have 
  \begin{equation}
    \label{eq:incomplete-information-one-side-uniform-rate}
    0 \leq \vex \osv{p_S} - \gvalue{\Gamma_n^{\emptyset, \onefunction}(p_S)} \leq \frac{C}{\sqrt{n}}.
  \end{equation}
\end{thm}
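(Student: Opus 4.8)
The plan is to establish the two inequalities in \eqref{eq:incomplete-information-one-side-uniform-rate} separately: the bound $\gvalue{\Gamma_n^{\emptyset,\onefunction}(p_S)} \le \vex\osv{p_S}$ by exhibiting a strategy for the \emph{informed} player (Bob), and the bound $\gvalue{\Gamma_n^{\emptyset,\onefunction}(p_S)} \ge \vex\osv{p_S} - C/\sqrt n$ by exhibiting a strategy for the \emph{uninformed} player (Alice). Throughout I use that the value of a finite matrix game is Lipschitz in its entries, so that $p \mapsto \osv{p}$ is Lipschitz on $\Delta(\mathcal S)$ with a constant depending only on $\max_{s,a,b}|g_s(a,b)|$ and $|\mathcal S|$; this uniformity is what will make $C$ independent of $p_S$.

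\textbf{Upper bound, via a splitting strategy for Bob.} Since $\vex\osv{p_S}$ is the value at $p_S$ of the convex envelope of $\osv{\cdot}$, Carath\'eodory's theorem together with continuity of $\osv{\cdot}$ yields finitely many distributions $p_1,\dots,p_K \in \Delta(\mathcal S)$ and weights $\lambda_1,\dots,\lambda_K \ge 0$ with $\sum_k \lambda_k = 1$, $\sum_k \lambda_k p_k = p_S$, and $\sum_k \lambda_k \osv{p_k} = \vex\osv{p_S}$. At the start of the game Bob---who observes $S$---draws an auxiliary index $K$ with $\pr{K=k \mid S = s} = \lambda_k p_k(s)/p_S(s)$, so that $\pr{K=k} = \lambda_k$ and the conditional law of $S$ given $\{K=k\}$ is exactly $p_k$. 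Bob then plays, i.i.d.\ across all $n$ stages, a minimizing optimal mixed action of the one-shot game with payoff matrix $\sum_s p_k(s) g_s$. As this action depends on $S$ only through $K$ and Alice has no side information, conditioned on $\{K=k\}$ the state $S$ remains independent of the entire action history; hence at every stage, and for every history-dependent action of Alice, the conditional expected stage payoff given $\{K=k\}$ is at most $\osv{p_k}$. Averaging over stages and then over $K$ gives $\ev{\sigma_n} \le \sum_k \lambda_k \osv{p_k} = \vex\osv{p_S}$ against every strategy of Alice, for every $n$, with no error term.

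\textbf{Lower bound, via approachability for Alice.} Here I use the classical argument for the uninformed player. Since Alice observes both players' actions, at stage $i$ she can form the vector $M_i = \big(g_s(A_i,B_i)\big)_{s\in\mathcal S} \in \reals^{\mathcal S}$ of hypothetical stage payoffs, whose coordinate $S$ is the actual stage payoff. Alice runs a Blackwell approachability strategy driving the running average $\tfrac1n\sum_{i=1}^n M_i$ towards a suitable closed convex set $C \subseteq \reals^{\mathcal S}$ built from $\osv{\cdot}$; the point is to verify, direction by direction and via von Neumann's minimax theorem, that $C$ satisfies Blackwell's sufficient condition for the maximizer, so that the quantitative form of Blackwell's theorem gives $\ev{\,\mathrm{dist}\!\big(\tfrac1n\sum_{i=1}^n M_i,\,C\big)} \le C'/\sqrt n$ with $C'$ depending only on $\max_{s,a,b}|g_s(a,b)|$ and $|\mathcal S|$. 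I expect the main obstacle to be the following: because Bob is informed, the empirical vector $\tfrac1n\sum_i M_i$ depends on the realized state, and the payoff guarantee cannot simply be obtained by pairing this vector with the prior $p_S$. The standard remedy is to follow the martingale of posteriors $(p_i)$ generated by Bob's actions and to combine the per-direction approachability estimate with a Jensen/concavification step along this martingale---equivalently, to route the argument through the recursive formula for $\gvalue{\Gamma_n^{\emptyset,\onefunction}}$ or through the dual game. This is precisely where the convex envelope $\vex\osv{\cdot}$ (rather than $\osv{\cdot}$ itself) enters, and where the rate $1/\sqrt n$ comes from, through the uniform bound $\sum_{i\ge 1}\ev{\|p_i-p_{i-1}\|^2} \le |\mathcal S|$ on the quadratic variation of the posterior martingale.

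\textbf{Conclusion.} Combining the two bounds yields \eqref{eq:incomplete-information-one-side-uniform-rate} and hence the existence of $\lim_{n\to\infty}\gvalue{\Gamma_n^{\emptyset,\onefunction}(p_S)} = \vex\osv{p_S}$; since every error term above was controlled using only $\max_{s,a,b}|g_s(a,b)|$, $|\mathcal S|$, and the uniform Lipschitz continuity of the value of a finite matrix game, the constant $C$ is independent of $p_S$.
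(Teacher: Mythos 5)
First, a point of reference: the paper does not actually prove this statement --- it is imported verbatim from Theorem~3.16 of \cite{RG-Zamir1992} (with maximizer/minimizer roles swapped, whence $\vex$ in place of $\cav$), and the paper offers only a heuristic sketch. Measured against that sketch, your upper bound is exactly the paper's argument and is essentially complete: the splitting of $p_S$ into $\sum_k\lambda_k p_k$ with $\sum_k\lambda_k\osv{p_k}=\vex\osv{p_S}$, Bob's auxiliary lottery $K$ with posterior $p_k$ given $\{K=k\}$, and the observation that a $K$-measurable i.i.d.\ optimal play keeps $S$ independent of the history conditionally on $K$, giving $\ev{\sigma_n}\le\vex\osv{p_S}$ for every $n$ with no error term. (Your version is in fact slightly cleaner than the paper's, which has Bob announce $K$ through his first few actions; as you note, the announcement is not needed for the inequality.)

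The lower bound is where the gap is. You correctly identify Blackwell approachability as a legitimate route (it is the other classical proof, distinct from the posterior-martingale/information-jump argument the paper sketches), but you never define the approachable set nor verify Blackwell's condition, and these are the entire content of that half of the theorem. Concretely, the set is not ``built from $\osv{\cdot}$'' in some unspecified way: one fixes a supporting hyperplane of $\vex\osvsymb$ at $p_S$, i.e.\ a vector $\psi\in\reals^{\mS}$ with $\langle q,\psi\rangle\le\osv{q}$ for all $q\in\Delta(\mS)$ and $\langle p_S,\psi\rangle=\vex\osv{p_S}$, and Alice approaches the orthant $\{x\in\reals^{\mS}: x_s\ge\psi_s\ \forall s\}$; the Blackwell condition for this orthant reduces, direction by direction over $q\in\Delta(\mS)$, to the minimax theorem for the averaged game with value $\osv{q}\ge\langle q,\psi\rangle$. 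Approaching the orthant at rate $O(1/\sqrt n)$ gives $\ev{\sigma_n\mid S=s}\ge\psi_s-O(1/\sqrt n)$ for each $s$, hence $\ev{\sigma_n}\ge\langle p_S,\psi\rangle-O(1/\sqrt n)=\vex\osv{p_S}-O(1/\sqrt n)$. Note that once this is set up, there is no need for the ``remedy'' you describe: the posterior martingale, the concavification/Jensen step, and the quadratic-variation bound $\sum_i\ev{\|p_i-p_{i-1}\|^2}\le|\mS|$ belong to the \emph{other} proof (the one the paper sketches), and your paragraph conflates the two. Either route works and yields a constant depending only on $\max_{s,a,b}|g_s(a,b)|$ and $|\mS|$, but as written your lower bound is a pointer to a proof rather than a proof.
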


\begin{rem}
  In \cite{RG-Zamir1992}, Alice is assumed to have full information and Bob knows nothing; in fact their place is reversed. In order to change their place, we can negate the payoff table. That is why we have $\vex$ instead of $\cav$ here and also the inequality direction in~\eqref{eq:incomplete-information-one-side-uniform-rate} is reversed. 
To be more precise, statement of Theorem~3.16 of \cite{RG-Zamir1992} in our notation translates to
\begin{equation*}
  0 \leq \gvalueB{\Gamma_n^{\emptyset, \onefunction}(p_S)} - \cav( - \osv{p_S}) \leq \frac{C}{\sqrt{n}}.
\end{equation*}
Noting $\gvalueA{\Upsilon} = - \gvalueB{\Upsilon}$ for any zero sum game $\Upsilon$ and $\cav(-f) = - \vex(f)$ for any function $f$ transforms the above equation into \eqref{eq:incomplete-information-one-side-uniform-rate}.
Also note that on the right hand side of the analogue of \eqref{eq:incomplete-information-one-side-uniform-rate} in \cite{RG-Zamir1992} we have the term  $\sum_{s \in \mS} \sqrt{p_S(s) (1-p_S(s))}$ which is upper bounded by $|\mS|$ and is absorbed into the constant $C$ here. 
\\
  Observe that the constant $C$ in \eqref{eq:incomplete-information-one-side-uniform-rate} is independent of $p_S$, hence it implies uniform convergence of the sequence $\gvalue{\Gamma_n^{\emptyset, \onefunction}(p_S)}$ to its limit on $p_S$. 
\end{rem}

In the following, we provide an intuitive sketch of the key ideas used to prove 
Theorem~\ref{thm:zamir-side-information-on-one-side}; see \cite{RG-Zamir1992} for a rigorous proof. Alice initially does not know anything about $S$. Bob knows $S$ and his actions may increase Alice's information about $S$. Let us denote Alice's information about $S$ at time stage $i$ by the mutual information $J_i=I(S;A_{[i-1]}B_{[i-1]})$ for $i\in[n]$. The sequence $\{J_i\}$ satisfies the following properties: $J_1=0$, $J_i\leq J_{i+1}$ and $J_i\in [0, H(S)]$. Take some $\delta>0$. We say that an information jump occurs at stage $i$ if $J_{i}-J_{i-1}\geq \delta$. Since $J_i\in [0, H(S)]$, the number of jumps is at most the  constant $k=H(S)/\delta$. Let $\mathcal{I}=\{i\in [n]: J_{i}-J_{i-1}\leq \delta\}$. Since $k$ is a constant, $|\mathcal I|\geq n-k$. The payoff of Alice is its average over time stages $1$ to $n$ and is dominated by the average of stages in $\mathcal I$, \emph{i.e.,} 
$$\frac{1}{n} \sum_{i=1}^n g_S(A_i,B_i) \approx  \frac{1}{|\mathcal I|} \sum_{i\in\mathcal I} g_S(A_i,B_i).$$
At time instances in $i\in \mathcal I$, Bob's strategy is essentially non-revealing in the sense that if from Alice's view, $S$ has conditional pmf $q_i(s)=p(s|a_{[i-1]}b_{[i-1]})$ at time stage $i$, we have that $q_{i}(s)\approx q_{i+1}(s)$. Then, the payoff that Alice can obtain at time stage $i$ is that of a non-revealing $\osv{q_i(s)}$. The average payoff over various realizations of $a_{[i-1]}b_{[i-1]}$ is equal to $$\sum_{a_{[i-1]}b_{[i-1]}}p(a_{[i-1]}b_{[i-1]})\osv{p(s|a_{[i-1]}b_{[i-1]})}\geq  \vex \osv{p}$$
as $\sum_{a_{[i-1]}b_{[i-1]}}p(a_{[i-1]}b_{[i-1]})p(s|a_{[i-1]}b_{[i-1]})=p(s)$. This demonstrates that Alice's payoff is greater than or equal to $ \vex \osv{p}$, regardless of how Bob plays.

On the other hand, Bob has a strategy ensuring that Alice's payoff does not exceed $\vex \osv{p}$. Assume that
$$\vex \osv{p}=\sum_{i=1}^k\lambda_i\osv{p_i(s)}$$
for some non-negative weights $\lambda_i, i\in[k]$ adding up to one, and pmfs $p_i(s)$ satisfying $$\sum_i\lambda_ip_i(s)=p(s).$$
Let $V$ be a random variable on alphabet set $\{1,2,\cdots, k\}$ satisfying $p(V=i)=\lambda_i$. Rv $V$ is joint distributed with $S$ as follows:
$$p(V=i, S=s)=\lambda_i p_i(s).$$
Bob can locally create $V$ by passing $S$ through a channel $p(v|s)$. Bob's strategy is then as follows: he uses his actions in the  first few instances of the game to communicate $V$ to Alice. The payoff in these first few instances of the game do not affect the overall payoff over the $n$ games. By doing this, Bob is effectively announcing $V$ to Alice, at no effective cost. Bob then proceeds as follows: he completely forgets the exact state $S$ and only given the variable $V$, he plays the optimal strategy of $\osv{p_i}$ when $V=i$. In this case, since the marginal distribution of $S$ is $p_i$ and Alice knows whatever Bob knows about the state, the posterior of the state does not change from stage to stage from Alice's point of view,\emph{i.e.,} she does not learn further about the state from Bob's actions than the initial announcement $V$. Hence, 
\begin{equation*}
  \ev{\frac{1}{n} \sum_{i=1}^n g_S(A_i, B_i)} = \sum_{i=1}^k \lambda_i \frac{1}{n} \sum_{j=1}^n \ev{g_S(A_j, B_j) | V= i} \leq \sum_{i=1}^k \lambda_i \osv{p_i}=\vex \osv{p}.
\end{equation*}
Roughly speaking, this argument shows that the optimal strategy for the informed player is to announce whatever the uninformed player is eventually going to learn about the state at the beginning of the game and forget the extra information, so that both players end up having a balanced information about the state. This completes the sketch of the proof of  \cite{RG-Zamir1992}.

An interesting implication of Theorem \ref{thm:zamir-side-information-on-one-side} is as follows: considering the mixed Nash strategies, Alice's mixed strategy ensures learning and exploiting from Bob's actions about state $S$ in an optimal way, for all possible strategies of Bob. In other words, it implies existence of a ``universal" algorithm for Alice that performs as if Alice knew Bob's strategy. 

\section{Guaranteeing with High Probability}
 \label{sec:vs-vw}

In this section we find the values of $\vstrong$ and $\vweak$. Without loss of generality, we  assume that $p_S(s), p_{S_A}(s_A)>0$ for all $s\in\mathcal S, s_A\in\mathcal S_A$, where $S_A=\mathscr T_A(S)$. Therefore $\mathscr T_A^{-1}(s_A) := \{s \in \mS: \mathscr T_A(s) = s_A\}$ is non-empty for all $s_A$. Our main result is the following:
\begin{thm}
 \label{thm:main-two-sided}
 We have
 \begin{equation}
  \vstrong(p_S, \mathscr T_A, \mathscr T_B) = \vweak(p_S, \mathscr T_A, \mathscr T_B) = \min_{s_A} \min_{p_S: \support(p_S) \subseteq \mathscr T_A^{-1}(s_A)} \osv{S}.
 \end{equation}
\end{thm}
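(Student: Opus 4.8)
The plan is to establish the two inequalities $\vweak(p_S,\mathscr T_A,\mathscr T_B)\le v^{\ast}$ and $\vstrong(p_S,\mathscr T_A,\mathscr T_B)\ge v^{\ast}$, where $v^{\ast}:=\min_{s_A}\min_{p_S:\,\support(p_S)\subseteq\mathscr T_A^{-1}(s_A)}\osv{S}$ is the claimed common value; together with $\vstrong\le\vweak$ (noted after Definition~\ref{def:weak-guaranteeing}) this proves the theorem. Write $\mathcal{T}_{s_A}:=\mathscr T_A^{-1}(s_A)$ and $w(s_A):=\min_{q:\,\support(q)\subseteq\mathcal{T}_{s_A}}\osv{q}$, so $v^{\ast}=\min_{s_A}w(s_A)$; by the standing assumption $p_S>0$ the posterior $p_S(\cdot\mid s_A)$ is supported on all of $\mathcal{T}_{s_A}$. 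The computational fact used throughout is that, for every $\beta\in\Delta(\mB)$,
\begin{equation}
\label{eq:plan-minimax}
 \max_{\alpha\in\Delta(\mA)}\ \min_{s\in\mathcal{T}_{s_A}}\ \sum_{a,b}\alpha(a)\beta(b)\,g_s(a,b)\;=\;\min_{q:\,\support(q)\subseteq\mathcal{T}_{s_A}}\ \max_{a}\ \sum_{s,b}q(s)\beta(b)\,g_s(a,b)\;\ge\;w(s_A),
\end{equation}
with equality throughout when $\beta$ is an optimal strategy for the minimiser in the one-stage game $\osv{q}$ for a minimising $q$: the first equality is von Neumann's theorem applied after replacing $\min_{s\in\mathcal{T}_{s_A}}$ by the bilinear $\min_{q:\,\support(q)\subseteq\mathcal{T}_{s_A}}$, and the inequality holds because $\max_a\sum_{s,b}q(s)\beta(b)g_s(a,b)\ge\osv{q}$ for each such $q$. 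Taking $\min_\beta$ on the left of \eqref{eq:plan-minimax} gives $w(s_A)$, so $v^{\ast}$—and, as will be clear, both guaranteed values—depend only on $\mathscr T_A$, never on $\mathscr T_B$.

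For $\vweak\le v^{\ast}$: fix $v>v^{\ast}$, let $s_A^{\ast}$ attain $v^{\ast}=w(s_A^{\ast})$, let $q^{\ast}$ be a minimiser in the definition of $w(s_A^{\ast})$, and let $\beta^{\ast}$ be an optimal minimiser strategy in $\osv{q^{\ast}}$. Let Bob play $\beta^{\ast}$ i.i.d., ignoring his side information and the history. Conditioned on $S=s$, for any strategy of Alice the increments $g_s(A_i,B_i)-\sum_b\beta^{\ast}(b)g_s(A_i,b)$ form a bounded martingale difference sequence, so by Azuma--Hoeffding $\sigma_n=\sum_a\bar\alpha(a)\sum_b\beta^{\ast}(b)g_s(a,b)+\delta_n$ with $\pr{|\delta_n|>\epsilon\mid S=s}\le 2e^{-c_\epsilon n}$ for a constant $c_\epsilon>0$, where $\bar\alpha$ is Alice's realised empirical action frequency; since Bob's play encodes nothing about $S$, the conditional law of $\bar\alpha$ given $S=s$ is one and the same measure for every $s\in\mathcal{T}_{s_A^{\ast}}$. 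By the same $\max/\min$ swap as in \eqref{eq:plan-minimax} with $\beta=\beta^{\ast}$, $\max_\alpha\min_{s\in\mathcal{T}_{s_A^{\ast}}}\sum_{a,b}\alpha(a)\beta^{\ast}(b)g_s(a,b)\le\osv{q^{\ast}}=v^{\ast}<v$, so for every realisation of $\bar\alpha$ there is a state $s_0\in\mathcal{T}_{s_A^{\ast}}$ with $\sum_{a,b}\bar\alpha(a)\beta^{\ast}(b)g_{s_0}(a,b)\le v^{\ast}$. Exchanging the finite sum over $s$ with the expectation over $\bar\alpha$ then gives $\pr{\sigma_n<v}\ge\min_{s\in\mathcal{T}_{s_A^{\ast}}}p_S(s)-2e^{-cn}$, uniformly over Alice's strategies; hence Alice cannot weakly guarantee $v$, and $\vweak\le v^{\ast}$.

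For $\vstrong\ge v^{\ast}$ I argue conditionally on $S_A$ and use Blackwell approachability. Fix $s_A$, set $\mathcal{T}:=\mathcal{T}_{s_A}$, $w:=w(s_A)$, and endow the repeated game with the vector stage payoff $\vec g(a,b):=(g_s(a,b))_{s\in\mathcal{T}}\in\reals^{\mathcal{T}}$—Alice can compute $\vec g(A_i,B_i)$ because she observes $B_i$. Consider the closed convex target set $C_{s_A}:=\{x\in\reals^{\mathcal{T}}:x_s\ge w\text{ for all }s\in\mathcal{T}\}$. By Blackwell's characterisation for convex sets, $C_{s_A}$ is approachable by Alice iff for every $\beta\in\Delta(\mB)$ there is $\alpha\in\Delta(\mA)$ with $\sum_{a,b}\alpha(a)\beta(b)g_s(a,b)\ge w$ for all $s$, i.e.\ iff $\max_\alpha\min_{s\in\mathcal{T}}\sum_{a,b}\alpha(a)\beta(b)g_s(a,b)\ge w$, which is exactly \eqref{eq:plan-minimax}. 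Hence Alice has a universal (state-independent) approachability strategy for which $d\!\left(\tfrac1n\sum_{i=1}^n\vec g(A_i,B_i),C_{s_A}\right)\to 0$ in probability against any behaviour of Bob, with rate uniform in the opponent; since Alice's randomisation is independent of $S$, this persists after conditioning on $\{S=s\}$ for any $s\in\mathcal{T}$, and as the $s$-th coordinate of $\tfrac1n\sum_i\vec g(A_i,B_i)$ is exactly $\sigma_n$ on that event, projecting onto $C_{s_A}$ gives $\pr{\sigma_n<w-\epsilon\mid S=s}\to 0$ for every $\epsilon>0$. Alice's overall strategy—read $S_A$, then run the approachability strategy for the target $C_{S_A}$—therefore satisfies $\pr{\sigma_n<v^{\ast}-\epsilon}\le\sum_{s\in\mS}p_S(s)\,\pr{\sigma_n<w(\mathscr T_A(s))-\epsilon\mid S=s}\to 0$, using $v^{\ast}\le w(\mathscr T_A(s))$ and $|\mS|<\infty$, so Alice strongly guarantees $v^{\ast}-\epsilon$ for every $\epsilon>0$, i.e.\ $\vstrong\ge v^{\ast}$.

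The only genuinely non-routine step is Alice's universal strategy in the lower bound; delegating it to Blackwell approachability, what remains is (i) the reduction of Blackwell's condition to \eqref{eq:plan-minimax}—a plain application of von Neumann's theorem, but one must keep the inner minimum over distributions supported on $\mathcal{T}_{s_A}$, which is exactly what pins Alice's guarantee at $\min_q\osv{q}$ rather than at the generally strictly smaller quantity $\max_\alpha\min_{s,b}\sum_a\alpha(a)g_s(a,b)$ achievable by a non-adaptive strategy—and (ii) book-keeping that the approachability rate and the Azuma tails are uniform in $n$, in the opponent, and in the realised state, so that the ``$\forall\epsilon\,\exists N$'' structure of Definitions~\ref{def:strong-guaranteeing} and~\ref{def:weak-guaranteeing} is met. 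A minor but worthwhile remark is that $\mathscr T_B$ never enters: in the upper bound Bob optimally discards his side information, and in the lower bound Alice's approachability guarantee is automatically worst-case over all $\mathscr T_B$-informed opponents.
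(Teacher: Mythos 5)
Your proof is correct, but it takes a genuinely different route from the paper's. The paper first reduces to the one-sided case $(\emptyset,\onefunction)$ via a ``Bob guesses the state'' argument (Lemma~\ref{lem:reduction-one-side-information}); for the lower bound it introduces an auxiliary game $\Omega_n$ in which Bob also selects the state, shows via a block-repetition submartingale/Azuma argument that Alice strongly guarantees $\gvalue{\Omega_n}=\min_p\gvalue{\Gamma_n^{\emptyset,\onefunction}(p)}$, and then evaluates the limit by importing Theorem~\ref{thm:zamir-side-information-on-one-side} (Zamir's $\vex\osvsymb$ formula, whose \emph{uniform} rate $C/\sqrt n$ is essential there); for the upper bound it has Bob play the equilibrium of $\osv{p}$ i.i.d.\ and passes through expected values ($\pr{\sigma_n<v}<\epsilon\Rightarrow\ev{\sigma_n}\gtrsim v$). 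You instead bypass the Mertens--Zamir machinery entirely: your lower bound constructs Alice's universal strategy directly by Blackwell approachability of the orthant $\{x:x_s\ge w(s_A)\}$, with the approachability condition reducing, via the minimax identity \eqref{eq:plan-minimax}, to exactly $\min_{q}\osv{q}$; and your upper bound uses the same i.i.d.\ Bob strategy but a pointwise Azuma argument on the realized empirical frequencies rather than an expectation argument, yielding the explicit bound $\pr{\sigma_n<v}\ge\min_s p_S(s)-2e^{-cn}$. What each approach buys: the paper's route reuses a deep off-the-shelf theorem and makes the connection to the expected-value theory explicit, while yours is more self-contained (no uniform-rate constant $C$ needed), handles the two-sided reduction by simply conditioning on $S_A$ in both directions rather than via Lemma~\ref{lem:reduction-one-side-information}, avoids the mildly delicate expectation step in the paper's converse (which implicitly needs payoffs bounded below), and exhibits concretely the ``universal'' strategy whose existence the paper only remarks upon --- at the cost of importing Blackwell's theorem (including the uniformity of its convergence rate over opponent strategies) as the one external ingredient.
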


\begin{example}
Consider the game tables given in Figure~\ref{figcap2} where the numbers in the table are Alice's payoff.
\begin{figure}\begin{center}
\begin{game}{2}{2}[\textbf{Alice}][\textbf{Bob}]
    $\mathbf{s=0}$       & $U$ & $D$ \\
  $R$ & -1        & 0        \\
  $L$ & 0        & 0       
\end{game}
\quad
\begin{game}{2}{2}[\textbf{Alice}][\textbf{Bob}]
      $\mathbf{s=1}$     & $U$ & $D$ \\
  $R$ & 0        & 0        \\
  $L$ & 0        & -1        
\end{game}
\\
\begin{game}{2}{2}[\textbf{Alice}][\textbf{Bob}]
       & $U$ & $D$ \\
  $R$ & -p        & 0        \\
  $L$ & 0        & -(1-p)        
\end{game}
\end{center}
\caption[]{(Top)  payoff tables for Alice in state $s\in\{0,1\}$. (Bottom) the average table.}
\label{figcap2}
\end{figure}
Assume that Bob knows the exact value of $S$, while Alice has no side information about $S$. The average table is also given in the figure. One can easily obtain $\osv{p} = - p (1-p)$ (\cite[Sec.~3.2.5.]{RG-Zamir1992}). Since $\osvsymb(\cdot)$ is convex, the maximum value that Alice can guarantee in expected value is $-p(1-p)$.
However, since Alice has no side information, we get
\begin{equation*}
  \vstrong = \vweak = \min_{p} \osv{p} = -\frac{1}{4},
\end{equation*}
which is strictly less than the expected value case unless $p = 1/2$. A naive approach suggests that perhaps it is more beneficial for Bob to play $U$ if $s=0$, and play $D$ if $s=1$. However, note that in this case, Alice after observing Bob's actions realizes the true state and plays  $L$ for $s=0$, and  $R$ for $s=1$. While if Bob chooses each column with probability $1/2$ independent of the state (which is a completely non--revealing strategy), then Alice does not gain any information about the true state and should choose one row with probability half (since she does not know where the $-1$ is located). This would guaruntee her a payoff of $-1/4$ in high probability. On the other hand, for the expected payoff regime, the optimal average payoff of Alice is $\vex \osv{p}=\osvsymb(p)$, and this is obtained by Bob playing the equilibrium strategy of the average table without using his knowledge of the state.
\end{example}

Before getting into the proof of this theorem in Section \ref{sec-sub-l2}, we prove a few lemmas. Our first observation is that 
the values of $\vweak$ and 
$\vstrong$ depend only on the support of $p(s)$.
\begin{lem}
\label{lem:support-invariant}
Assume $p_S$ and $\tilde{p}_S$ are two distributions on $\mS$ such that $\support p_S = \support \tilde{p}_S$. Then we have 
\begin{align*}
  \vstrong(p_S, \mathscr T_A, \mathscr T_B) &= \vstrong(\tilde{p}_S, \mathscr T_A, \mathscr T_B),
\\
  \vweak(p_S, \mathscr T_A, \mathscr T_B) &= \vweak(\tilde{p}_S, \mathscr T_A, \mathscr T_B).
\end{align*}
\end{lem}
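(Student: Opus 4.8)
The plan is to exploit the fact that the strategy sets available to Alice and Bob in $\Gamma_n^{\mathscr T_A, \mathscr T_B}(p_S)$ do not depend on the prior at all: $\tilde{\mA}_n$ and $\tilde{\mB}_n$ are fixed collections of random functions, and the high-probability definitions quantify over these fixed sets. Once a pair $(\alpha,\beta)$ is fixed, the outcome law
$p(s,a_{[n]},b_{[n]}) = p_S(s)\prod_{i=1}^n \alpha(a_i\mid a_{[i-1]},b_{[i-1]},\mathscr T_A(s))\,\beta(b_i\mid a_{[i-1]},b_{[i-1]},\mathscr T_B(s))$
shows that the \emph{conditional} law of $(A_{[n]},B_{[n]})$ given $S=s$ involves $p_S$ nowhere. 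Since $\sigma_n$ is a deterministic function of $(S,A_{[n]},B_{[n]})$, the conditional probability $\pr{\sigma_n < v \mid S = s}$ is therefore the same whether the prior is $p_S$ or $\tilde p_S$ (for the same strategy pair); only the mixing weights over the states change.

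The second step is a change-of-measure comparison. Fix $(\alpha,\beta)$ and write $\mathbb{P}$ and $\tilde{\mathbb{P}}$ for the induced outcome laws under $p_S$ and $\tilde p_S$. Put $M := \max_{s\in\support p_S} \tilde p_S(s)/p_S(s)$, which is finite precisely because $\support p_S = \support \tilde p_S$ is a finite set on which both probabilities are strictly positive. Decomposing over the state and using Step~1,
\begin{equation*}
\tilde{\mathbb{P}}(\sigma_n < v) = \sum_{s} \tilde p_S(s)\,\pr{\sigma_n < v \mid S=s} \leq M \sum_{s} p_S(s)\,\pr{\sigma_n < v \mid S=s} = M\,\mathbb{P}(\sigma_n < v),
\end{equation*}
and crucially $M$ depends only on the two priors — not on $n$, $v$, $\alpha$ or $\beta$.

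Now I combine the two. Given $\epsilon > 0$, apply Definition~\ref{def:strong-guaranteeing} to the prior $p_S$ with the smaller tolerance $\epsilon/M$: there is an $N$ such that for every $n>N$ Alice has a strategy $\alpha$ with $\pr{\sigma_n < v} < \epsilon/M$ against every $\beta$; the comparison inequality upgrades this to $\tilde{\mathbb{P}}(\sigma_n < v) < \epsilon$ against every $\beta$, so Alice can strongly guarantee $v$ under $\tilde p_S$ as well. Hence every value strongly guaranteeable under $p_S$ is strongly guaranteeable under $\tilde p_S$; interchanging the roles of $p_S$ and $\tilde p_S$ gives the reverse inclusion, and taking suprema yields $\vstrong(p_S,\mathscr T_A,\mathscr T_B) = \vstrong(\tilde p_S,\mathscr T_A,\mathscr T_B)$. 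The weak case is identical: the bound $\tilde{\mathbb{P}}(\sigma_n<v)\le M\,\mathbb{P}(\sigma_n<v)$ holds per strategy-pair, so it is compatible with the ``for each $\beta$ there exists $\alpha$'' order of quantifiers in Definition~\ref{def:weak-guaranteeing}, and one repeats the argument with tolerance $\epsilon/M$.

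I do not expect a genuine obstacle here; the lemma is in essence bookkeeping about a change of measure. The only two points that deserve attention are that $M$ is finite — this is exactly where the equal-support hypothesis enters, and the statement would be false without it — and that the comparison bound is pointwise in $(\alpha,\beta)$, so that it respects the delicate difference in quantifier order between weak and strong guaranteeing.
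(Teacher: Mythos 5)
Your proposal is correct and is essentially the paper's own argument: both rest on the observation that $\pr{\sigma_n < v} = \sum_s p_S(s)\pr{\sigma_n < v \mid S = s}$ with the conditional probabilities independent of the prior, followed by a change-of-measure bound (the paper's constant $1/p_{\min}$ plays the role of your $M$). Your write-up is somewhat more careful than the paper's in spelling out why the conditional law is prior-independent and in checking compatibility with the quantifier order in the weak case, but the route is the same.
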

\begin{proof}
Note that
$
 \pr{\sigma_n <v} = \sum_s p_S(s) \pr{\sigma_n < n|s}
$. 
Therefore, if $\pr{\sigma_n < v} < \epsilon$, then we have
\begin{equation}
 \pr{\sigma_n < v | s} < \frac{\epsilon}{p_\text{min}}, \quad \forall s\in\mathcal S,
\end{equation}
where $p_\text{min}\neq 0$ is the minimum value of $p(s)$ on its support. Then, we have
\begin{equation}
 \prwrt{\tilde{p}}{\sigma_n < v} = \sum_s \tilde{p}(s) \pr{\sigma_n < v|s} \leq 
\sum_s \tilde{p}(s) \frac{\epsilon}{p_\text{min}} = 
\frac{\epsilon}{p_\text{min}},
\end{equation}
which could be made small enough by setting $\epsilon$ sufficiently small. 
\end{proof}

\begin{rem}
\label{rem:vweak-vstrong-set-notation}
As a result of this lemma,  for a subset $\mS' \subseteq \mS$ we may use $\vweak(\mS', \mathscr T_A, \mathscr T_B)$ and $\vstrong(\mS', \mathscr T_A, \mathscr T_B)$ as the value of $\vweak(q_S, \mathscr T_A, \mathscr T_B)$ and $\vstrong(q_S, \mathscr T_A, \mathscr T_B)$, respectively, for any distribution $q_S$ with $\support q_S = \mS'$. In fact, $\vweak(\mS', \mathscr T_A, \mathscr T_B)$ and $\vstrong(\mS', \mathscr T_A, \mathscr T_B)$ could be interpreted as values that Alice can guarantee ``for each possible state in $\mS'$'' in the worst case regime. 
\end{rem}

In the following lemma, we reduce the problem of finding $\vweak$ and $\vsup$ to the case where Alice has zero side information about the game state and Bob exactly knows its value. We use the notations $\emptyset$ and $\onefunction$ from the previous section.

\begin{lem}
\label{lem:reduction-one-side-information}
We have
\begin{align}
\label{eq:vweak-empty-one}
  \vweak(\mS, \mathscr T_A, \mathscr T_B) &= \min_{s_A} \vweak(\mathscr T_A^{-1}(s_A), \emptyset, \onefunction),
\\
  \label{eq:vstrong-empty-one}
  \vstrong(\mS, \mathscr T_A, \mathscr T_B) &= \min_{s_A} \vstrong(\mathscr T_A^{-1}(s_A), \emptyset, \onefunction).
\end{align}
\end{lem}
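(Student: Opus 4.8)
The plan is to prove both identities by two successive reductions: first peel off Alice's side information by conditioning on the realized value $s_A$ of $S_A=\mathscr T_A(S)$, which turns the problem into a family of games on the fibers $\mathscr T_A^{-1}(s_A)$ in which Alice is \emph{uninformed}; then, on each such fiber, upgrade Bob's side information to full knowledge of $S$ without changing what Alice can guarantee.

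For the first reduction, since every $p_{S_A}(s_A)>0$, the decomposition $\pr{\sigma_n<v}=\sum_{s_A}p_{S_A}(s_A)\,\pr{\sigma_n<v\mid S_A=s_A}$ shows that $\pr{\sigma_n<v}<\epsilon$ holds iff $\pr{\sigma_n<v\mid S_A=s_A}<\epsilon/p_{\min}$ holds for every $s_A$, where $p_{\min}=\min_{s_A}p_{S_A}(s_A)>0$. Conditioned on $\{S_A=s_A\}$, the state $S$ ranges over $\mathscr T_A^{-1}(s_A)$ with a distribution of full support there, Alice's side information is constant, and Bob's is $\mathscr T_B$ restricted to that fiber; moreover Alice observes $s_A$ before playing, so any strategy of hers decomposes into one strategy per fiber, while any strategy of Bob in the original game restricts to a strategy on each fiber (and, conversely, any fiberwise strategy extends arbitrarily to the other states). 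Combining this with Lemma~\ref{lem:support-invariant} and Remark~\ref{rem:vweak-vstrong-set-notation} I would conclude
\begin{align*}
  \vweak(\mS,\mathscr T_A,\mathscr T_B) &= \min_{s_A}\vweak\bigl(\mathscr T_A^{-1}(s_A),\emptyset,\mathscr T_B|_{\mathscr T_A^{-1}(s_A)}\bigr),\\
  \vstrong(\mS,\mathscr T_A,\mathscr T_B) &= \min_{s_A}\vstrong\bigl(\mathscr T_A^{-1}(s_A),\emptyset,\mathscr T_B|_{\mathscr T_A^{-1}(s_A)}\bigr);
\end{align*}
for $\vstrong$ one uses that a universal strategy for Alice is precisely a fiberwise family of universal strategies, and for $\vweak$ that $\epsilon/p_{\min}\to0$ together with $\epsilon\to0$.

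It then remains to show that for every $\mS'\subseteq\mS$ and every side-information function $\mathscr T_B'$ on $\mS'$ one has $\vweak(\mS',\emptyset,\mathscr T_B')=\vweak(\mS',\emptyset,\onefunction)$, and likewise for $\vstrong$. The inequality ``$\ge$'' is monotonicity of the guaranteed value in Bob's information: a fully informed Bob can compute $\mathscr T_B'(S)$ and replay any $\mathscr T_B'$-strategy, inducing the same law of $\sigma_n$, hence he is at least as effective a minimizer. The reverse inequality ``$\le$'' is the crux: when Alice has \emph{no} information about $S$, Bob gains nothing from correlating his actions with $S$, because Alice observes Bob's whole action history, so any Bob strategy under which the posterior $Q_i=p\bigl(S=\cdot \mid A_{[i-1]},B_{[i-1]}\bigr)$ moves away from the prior is ``revealing'', and any revelation only helps Alice (on learning a coarsening $V$ of $S$ she faces the non-revealing game on the posterior $q_V$, which she can play at least as well). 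To push this through in the high-probability regime I would, for each fixed state $s$, split the $n$ stages into the boundedly many stages at which $Q_i$ jumps — using the bounded-martingale argument sketched for Theorem~\ref{thm:zamir-side-information-on-one-side} — and the remaining stages on which $Q_i$ is essentially constant, have Alice play near-optimally for the realized posterior's average game $\osvsymb$ there, and bound $\pr{\sigma_n<v\mid S=s}$ by a union bound over the (few) revelation epochs together with an $O(1/\sqrt n)$-type rate. This shows the infimum over Bob's strategies is attained, in the limit, by a \emph{state-independent} strategy, which a Bob endowed only with $\mathscr T_B'$ can also play; hence he disrupts Alice just as well. The same argument applies to $\vstrong$, since Alice's near-optimal response does not use knowledge of Bob's strategy.

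The main obstacle is exactly this last step — the high-probability counterpart of the Mertens–Zamir principle that the informed player should reveal nothing beyond what the opponent will eventually learn. In the expected-value regime one simply averages conditional payoffs; here one must instead control, uniformly in $n$, the conditional-on-$s$ \emph{distribution} of $\sigma_n$, which is where the uniform rate of Theorem~\ref{thm:zamir-side-information-on-one-side} and a careful treatment of the random (but boundedly many) information jumps are needed. The conditioning decomposition and the monotonicity in Bob's information, by contrast, are routine.
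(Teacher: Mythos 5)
Your first reduction (conditioning on $s_A$ and passing to the fibers $\mathscr T_A^{-1}(s_A)$, using $p_{S_A}(s_A)>0$ and the fiberwise decomposition of strategies) is essentially the paper's own second step, just performed in the opposite order, and it is fine. The problem is your second reduction, the inequality $\vweak(\mS',\emptyset,\mathscr T_B')\le\vweak(\mS',\emptyset,\onefunction)$, which you correctly identify as the crux but for which you reach for the wrong tool. The paper's argument here is a one-paragraph \emph{guessing} trick that has nothing to do with posteriors or information jumps: a Bob who only knows $\mathscr T_B(S)$ privately guesses the full state and plays any given fully-informed strategy as if his guess were correct. The guess is correct with probability at least $1/|\mS|$, a constant independent of $n$, and conditioned on that event the play is distributed exactly as under the fully-informed strategy. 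Since Alice must achieve $\pr{\sigma_n<v}<\epsilon$ for \emph{every} $\epsilon>0$, she cannot write off a constant-probability event, so any value she can (weakly or strongly) guarantee against a $\mathscr T_B$-informed Bob she also guarantees against a fully informed one. This is precisely where the high-probability framework departs from the expected-value one (where a guessing Bob perturbs the expectation only by a constant factor on a constant-probability event), and it is the observation your proof is missing.

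The route you propose instead --- showing that the infimum over fully-informed Bob's strategies is asymptotically attained by a state-independent strategy via posterior jumps and an Azuma-type union bound --- has a concrete gap and also defeats the purpose of the lemma. The gap: having Alice ``play near-optimally for the realized posterior's average game $\osv{q_i}$'' only controls the $q_i$-weighted average $\sum_s q_i(s)\,\ev{g_s(A_i,B_i)\mid S=s}$, not the conditional payoff for the realized $s$; but by Lemma~\ref{lem:support-invariant} the quantity you must control is $\pr{\sigma_n<v\mid S=s}$ for each $s$ separately, and under the measure conditioned on $S=s$ the posterior process is not even a martingale, so the bounded-jump bookkeeping you invoke does not transfer. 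Closing this gap is exactly the content of Proposition~\ref{prop:lower-bound-final}, which the paper proves only later via the auxiliary game $\Omega_n$, block repetition, and Azuma's inequality; so your version of Lemma~\ref{lem:reduction-one-side-information} would presuppose the hard half of Theorem~\ref{thm:main-two-sided} rather than reduce to it. Replace the second reduction with the guessing argument and the lemma becomes elementary.
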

\begin{proof}
We first show that \begin{equation}\vweak(\mS, \mathscr T_A, \mathscr T_B) =\vweak(\mS, \mathscr T_A, \onefunction)\label{eqn:tbto1}\end{equation} and similarly for $ \vstrong$. In other words, $\vweak$ does not depend on $\mathscr T_B$ and from Alice's perspective, it is always as if Bob knows the state perfectly. 
To show this, consider the following strategy for Bob: he guesses  the state $S$ randomly and proceeds assuming that his guess is the correct value for $S$. Since the state space is finite, with a nonzero and constant probability his guess becomes true. But since Alice should guarantee with high probability, she can not neglect the constant probability of Bob's guess becoming true. Therefore, her strategy should be for the worst case, guaranteeing her payoff conditioned on the event that Bob's guess about the state is correct. This completes the proof for $\vweak(\mS, \mathscr T_A, \mathscr T_B) =\vweak(\mS, \mathscr T_A, \onefunction) $.

It remains to show that
\begin{equation*}
  \vweak(\mS, \mathscr T_A, \onefunction) = \min_{s_A} \vweak(\mathscr T_A^{-1}(s_A), \emptyset, \onefunction),
\end{equation*}
 and similarly for $ \vstrong$. When Alice receives a side information $s_A$, any of the states in the set $\mathscr T_A^{-1}(s_A)$ may have happened.  Since Alice has no further initial side information other than $s_A$, we can assume that state space is reduced to $\mathscr T_A^{-1}(s_A)$ with Alice having zero side information. Then, $\vweak(\mathscr T_A^{-1}(s_A), \emptyset, \onefunction)$ would be the payoff that can be guaranteed in this case. Since Alice should guarantee for any possible value of $s_A$, the maximum payoff she can guarantee is $\min_{s_A} \vweak(\mathscr T_A^{-1}(s_A), \emptyset, \onefunction)$.
\end{proof}

\subsection{Proof of Theorem 
 \ref{thm:main-two-sided}}\label{sec-sub-l2} By Lemma \ref{lem:reduction-one-side-information}, we only need to show that 
 \begin{equation}
  \vstrong(\mS, \emptyset, \onefunction) = \vweak(\mS, \emptyset, \onefunction) =  \min_{p_S} \osv{p_S}.
 \end{equation}
Since $\vstrong \leq \vweak$, it suffices to show the following two propositions:
\begin{prop}
 \label{prop:lower-bound-final}
 We have
\begin{equation}
  \vstrong(\mS, \emptyset, \onefunction)\geq \min_{p_S} \osv{p_S}.\label{eqn:step2}
\end{equation}
\end{prop}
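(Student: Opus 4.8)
The plan is to exhibit an explicit universal strategy for Alice in $\Gamma_n^{\emptyset,\onefunction}(p_S)$ that guarantees, for every strategy of Bob, that $\sigma_n \geq \min_{p_S}\osv{p_S} - o(1)$ with probability $1-\epsilon$. Write $w := \min_{q}\osv{q}$, the minimum of the average-game value over the probability simplex on $\mS$. Since the quantity we must beat is the \emph{minimum} over priors, the key conceptual point is that at every stage, whatever Alice's current posterior on $S$ happens to be (given the observed action history), the value of the non-revealing one-stage game against that posterior is at least $w$; so Alice simply needs to play, at each stage, a one-stage optimal (maximin) strategy for the average game corresponding to her \emph{current} posterior $q_i(s) = p(s \mid a_{[i-1]}, b_{[i-1]})$. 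Note Alice can compute this posterior since she knows Bob's strategy is irrelevant to the bound — wait, she does not know Bob's strategy for $\vstrong$; but she \emph{does} observe the history $a_{[i-1]}, b_{[i-1]}$, and to compute $q_i$ she needs Bob's strategy. To make the strategy truly universal I would instead have Alice use the mixed/behavioral Nash strategy of the strategic-form game, or — cleaner — invoke Theorem~\ref{thm:zamir-side-information-on-one-side}: it already provides a universal strategy for Alice guaranteeing expected payoff within $C/\sqrt n$ of $\vex\osv{p_S}$ \emph{against every Bob}, and crucially the argument there (the information-jump sketch) yields a per-state guarantee.

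Concretely, here is the route I would take. First, observe that $\vex\osv{p_S} \geq w$ trivially, but that is only an expectation bound; the real work is to get a high-probability, per-state bound. So the main step is: show Alice has a strategy such that for \emph{every} Bob strategy and \emph{every} state $s$, the conditional expectation $\ev{\sigma_n \mid S=s} \geq w - C/\sqrt n$, and moreover $\sigma_n$ concentrates around its conditional mean given $S=s$. For the concentration, condition on $S=s$; then $g_s(A_i,B_i)$ is a bounded sequence and one applies Azuma–Hoeffding to the martingale differences $g_s(A_i,B_i) - \ev{g_s(A_i,B_i)\mid S=s, \text{history}_{i-1}}$, giving $\pr{\sigma_n < \ev{\sigma_n\mid S=s} - \delta \mid S=s} \leq e^{-n\delta^2/(2\|g\|_\infty^2)}$. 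Taking a union bound over the finitely many $s \in \mS$ and choosing $n$ large makes this $<\epsilon$. For the conditional-mean bound, I would adapt the Mertens–Zamir / Zamir sketch reproduced in the excerpt: Alice plays a behavioral strategy that at each stage is a one-stage optimal strategy of the average game against her current posterior $q_i$; the information-jump argument bounds the number of stages where the posterior moves a lot by a constant, and on the remaining stages the stage payoff conditioned on $S=s$, averaged over histories with posterior $q_i$, is at least $\osv{q_i} \geq w$. One has to be a little careful that this gives a bound conditioned on $S=s$ rather than only in expectation over $S$ — this works because $\sum_s q_i(s)\,\ev{g_s(A_i,B_i)\mid \text{history}} \geq \osv{q_i}$ for Alice's optimal play, but we want it for each fixed $s$; the resolution is that we bound the \emph{average over $s$ weighted by the prior} of the conditional means, and then transfer to individual states using Lemma~\ref{lem:support-invariant}-style reasoning (the guarantee holds for all priors with the same support, hence "for each state" in the sense of Remark~\ref{rem:vweak-vstrong-set-notation}).

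The cleanest packaging, which I would actually write, is: apply Theorem~\ref{thm:zamir-side-information-on-one-side} not to $p_S$ but to an arbitrary prior, note its universal strategy $\alpha$ for Alice achieves $\ev{\sigma_n} \geq \vex\osv{q_S} - C/\sqrt n$ against every Bob for every $q_S$; but this still doesn't immediately give the per-state statement. So the honest approach is the direct martingale argument above, using as $\alpha$ the repeated one-stage-optimal-against-current-posterior strategy, and proving the two ingredients (constant number of information jumps $\Rightarrow$ conditional mean $\geq w - o(1)$ for each $s$; Azuma concentration $\Rightarrow$ high probability). Then for any $\epsilon > 0$ pick $\delta$ with $w - \delta >$ (target), pick $n$ large enough that both $C/\sqrt n < \delta/2$ and $|\mS|e^{-n\delta^2/(8\|g\|_\infty^2)} < \epsilon$, and we are done: $\pr{\sigma_n < w - \delta} < \epsilon$ for all Bob, which is exactly strong guaranteeing of every value below $w$, hence $\vstrong(\mS,\emptyset,\onefunction) \geq w$.

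The main obstacle I anticipate is making the "per-state conditional-mean $\geq w - o(1)$" claim fully rigorous: the Mertens–Zamir information-jump argument is naturally phrased in terms of expectations over the prior, and one must check carefully that Alice's strategy — which only sees actions, not $S$ — nonetheless controls $\ev{\sigma_n \mid S=s}$ for each $s$ simultaneously, uniformly in Bob's strategy. The key fact enabling this is that Alice's posterior $q_i$ is a martingale and its total variation/KL movement telescopes to at most $H(S)$, bounding the bad stages by a prior-independent constant, so the bound degrades gracefully for every realized $s$; combined with boundedness of payoffs and Azuma, this closes the gap.
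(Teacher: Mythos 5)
Your proposal correctly identifies the target $w=\min_q\osv{q}$ and correctly senses where the difficulty lies, but the gap you flag at the end is a real one and your proposed resolution does not close it. The stage-wise bound you need --- that for each fixed state $s$, $\ev{g_s(A_i,B_i)\mid S=s,\,\text{history}_{i-1}} \geq w - o(1)$ --- is simply false for the strategy ``play one-stage optimal against the current posterior $q_i$'': that strategy only guarantees $\sum_s q_i(s)\,\ev{g_s(A_i,B_i)\mid \text{history}_{i-1}} \geq \osv{q_i}\geq w$, i.e., a bound on the \emph{posterior-averaged} conditional mean. (In the example of Figure~\ref{figcap2} with $q_i=(1/2,1/2)$, against Bob playing $U$ the conditional mean given $s=0$ is $-1/2 < -1/4 = w$.) Consequently your Azuma step conditioned on $S=s$ concentrates $\sigma_n$ around the random sum of per-state one-step conditional means, which need not be bounded below by $w-o(1)$; and ``transferring to individual states by Lemma~\ref{lem:support-invariant}-style reasoning'' cannot help, because that lemma transfers \emph{high-probability} statements between equal-support priors, whereas what you have at that point is only an expectation statement averaged over the prior --- and the entire premise of the paper is that expectation guarantees do not imply high-probability ones. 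The universality issue you raise (Alice cannot compute $q_i$ without knowing Bob's strategy) is likewise acknowledged but never resolved.

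The paper circumvents both difficulties with a different construction. It introduces an auxiliary game $\Omega_n$ in which \emph{Bob chooses the state} at the outset (unobserved by Alice); its value equals $\min_p\gvalue{\Gamma_n^{\emptyset,\onefunction}(p)}$, so the worst-case-over-priors aspect is absorbed into the game itself and the equilibrium strategy $\tilde\alpha$ of $\Omega_n$ is automatically universal. Alice then plays $\tilde\alpha$ i.i.d.\ over $m$ independent blocks of length $n$: because the state is re-chosen by Bob at each block, the conditional expectation of each block's total payoff given the entire past (including past block states) is at least $n\,\gvalue{\Omega_n}$ for \emph{every} realization of that past --- this is the deterministic one-step drift bound that a single pass through one $n$-stage game lacks --- so the centered cumulative payoff is a genuine submartingale with bounded increments and Azuma applies at the block level. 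Finally, Theorem~\ref{thm:zamir-side-information-on-one-side} with its $p$-uniform constant $C$ gives $\min_p\gvalue{\Gamma_n^{\emptyset,\onefunction}(p)}\geq\min_p\vex\osv{p}-C/\sqrt n=\min_p\osv{p}-C/\sqrt n$, and letting $n\to\infty$ finishes. If you want to salvage your approach, block repetition is the missing ingredient: you must repeat a game whose \emph{expected} value already equals the min-over-priors quantity, rather than trying to extract a per-state high-probability bound from a single $n$-stage run.
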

 \begin{prop}
 \label{prop:uppernound}
 We have
 \begin{equation}
  \vweak(\mS, \emptyset, \onefunction) \leq \min_{p_S} \osv{p_S},\label{eqn:step1}
 \end{equation}
\end{prop}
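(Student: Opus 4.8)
To prove Proposition~\ref{prop:uppernound} I would argue by contraposition: fix any $v > v^\star := \min_{p_S}\osv{p_S}$ and exhibit a \emph{single} strategy for Bob --- the same for every $n$ and independent of Alice's strategy --- under which $\pr{\sigma_n < v}$ stays bounded below by a fixed positive constant; this directly contradicts Definition~\ref{def:weak-guaranteeing}, hence $v$ is not weakly guaranteeable and $\vweak(\mS,\emptyset,\onefunction)\le v^\star$. Let $p^\star$ be a distribution on $\mS$ attaining the minimum (it exists since $\osv{\cdot}$ is continuous on the compact simplex), write $\bar g := \sum_s p^\star(s)\,g_s$ for the associated average payoff table, and let $\beta^\star$ be a value-achieving (minimax-optimal) mixed action for Bob in the one-shot game $\bar g$, so that $\sum_{a,b}\alpha(a)\beta^\star(b)\,\bar g(a,b)\le v^\star$ for every mixed action $\alpha$ of Alice. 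Bob's strategy is simply to play the mixed action $\beta^\star$ independently at every stage, ignoring both the realized state and the action history.

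The core of the argument is that, against this state-blind and history-blind play, and because Alice has no side information in $\Gamma_n^{\emptyset,\onefunction}$, the whole action history $(A_{[i-1]},B_{[i-1]})$ is independent of $S$; consequently $\Pr(A_i=a\mid S=s)$ does not depend on $s$. Letting $\bar\alpha$ denote the time-averaged marginal distribution of Alice's actions (which depends on Alice's strategy and on Bob's fixed play, but on nothing else), a direct computation gives $\ev{\sigma_n \mid S=s} = \sum_{a,b}\bar\alpha(a)\beta^\star(b)\,g_s(a,b)$ for each $s$. Averaging over $s$ with weights $p^\star(s)$ and invoking the defining property of $\beta^\star$,
\[
\sum_s p^\star(s)\,\ev{\sigma_n\mid S=s} \;=\; \sum_{a,b}\bar\alpha(a)\beta^\star(b)\,\bar g(a,b) \;\le\; v^\star \;<\; v ,
\]
so there must exist a state $s_0\in\support(p^\star)$ with $\ev{\sigma_n \mid S=s_0}\le v^\star < v$ (note $s_0$ may depend on Alice's strategy, but this will not matter for the final bound).

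It remains to turn this expectation bound into a probability bound. Since the payoff tables are finite, $\sigma_n \ge g_\text{min} := \min_{s,a,b} g_s(a,b)$ deterministically, and $v^\star\ge g_\text{min}$; splitting $\ev{\sigma_n\mid S=s_0}$ over $\{\sigma_n<v\}$ and its complement gives a one-line Markov-type estimate
\[
\pr{\sigma_n < v \mid S=s_0} \;\ge\; \frac{v-v^\star}{v - g_\text{min}} \;=:\; \delta \;>\;0 ,
\]
a constant depending only on $v$, $v^\star$ and the tables --- in particular not on $n$ and not on Alice's strategy. By Lemma~\ref{lem:support-invariant} we may take the prior $p_S$ to have full support on $\mS$, so with $p_\text{min}:=\min_{s\in\mS}p_S(s)>0$ we get $\pr{\sigma_n<v}\ge p_S(s_0)\,\delta \ge p_\text{min}\,\delta > 0$ for every Alice strategy and every $n$. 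Taking $\epsilon = p_\text{min}\delta$ then defeats the requirement in Definition~\ref{def:weak-guaranteeing}, finishing the proof.

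The step I expect to need the most care is the independence claim together with the reduction of Alice's behavior to a single mixed action $\bar\alpha$: one must check, by induction on the stage, that because Alice has no side information and Bob's draws are i.i.d.\ and independent of everything, $(A_{[i]},B_{[i]})$ is independent of $S$, and hence that the average-game inequality applies verbatim to $\bar\alpha$. The remaining ingredients --- existence of the minimizer $p^\star$, the minimax property of $\beta^\star$, and the Markov estimate --- are routine. Conceptually this is the ``high-probability'' mirror of the informed-player upper-bound sketch after Theorem~\ref{thm:zamir-side-information-on-one-side}: Bob does best by announcing nothing and committing to a worst-case prior, except that here ``worst case'' is the minimizer of $\osvsymb$ over the whole simplex rather than a convex-decomposition point, and the conclusion is a constant-probability bad event rather than a statement about expectations.
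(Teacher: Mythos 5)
Your proof is correct, and its core is the same as the paper's: Bob commits to playing, i.i.d.\ and ignoring both his side information and the history, an optimal mixed action of the one-shot averaged game at a minimizing prior, so that Alice learns nothing and her per-stage expected payoff is pinned at or below $\min_{p}\osv{p}$. The differences are in the surrounding bookkeeping, and on balance your version is more self-contained. First, you do not need the reduction $\vweak(\mS,\emptyset,\onefunction)=\vweak(\mS,\emptyset,\emptyset)$ from Lemma~\ref{lem:reduction-one-side-information} (which the paper invokes, and which rests on the only-sketched ``Bob guesses the state'' argument), because your Bob simply discards $S_B$, a legal strategy in $\Gamma_n^{\emptyset,\onefunction}$. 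Second, you do not need the support-monotonicity step $\vweak(\mS,\emptyset,\emptyset)\le\vweak(p,\emptyset,\emptyset)$, since you use $p^\star$ only as averaging weights over the state-conditional expectations, which are prior-independent once the action history is independent of $S$. Third, your expectation-to-probability step is tighter than the paper's: the paper asserts that $\pr{\sigma_n<v}<\epsilon$ implies $\ev{\sigma_n}\ge v(1-\epsilon)$, which as written ignores the negative tail of $\sigma_n$ and really needs the lower bound $g_{\min}$ on payoffs together with the $\epsilon\to 0$ limit, whereas your conditional Markov estimate $\pr{\sigma_n<v\mid S=s_0}\ge (v-v^\star)/(v-g_{\min})$ handles this explicitly and yields a quantitative, $n$-independent failure probability for any $v$ above the claimed bound. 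The step you flagged as delicate --- independence of $(A_{[i]},B_{[i]})$ from $S$, and hence the reduction of Alice's play to the time-averaged marginal $\bar\alpha$ --- is indeed where the argument lives, but it follows immediately from the factorization of the behavioral-strategy joint law when $\alpha$ does not see $s$ and $\beta^\star$ sees nothing; the paper makes the same claim informally. In short: same key idea, slightly different and somewhat more careful execution.
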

To prove the above propositions, we first show a lemma:
\begin{lem}
 \label{prop:lower-bound-sasl}
 We have 
 \begin{equation}
  \vstrong(\mS, \emptyset, \onefunction) \geq \gvalue{\Omega_n(\mS)} = \min_{p(s)} 
\gvalue{\Gamma^{\emptyset,\onefunction}_n(p)}, \qquad \forall n \in \mathbb{N}\label{eqn22par1}
 \end{equation}
where ${\Omega}_n(\mS)$ is an auxiliary zero--sum game in which Bob chooses state $s$ (the table $g_s$) from the set $\mS$
once and for all, and 
Alice receives no side information, and then each 
player 
observes the history of the game (expect that Alice does not observe Bob's 
action on choosing the table). The game is played for $n$ stages. The final payoff of Alice is the average of her payoff in the $n$ subgames, according to the payoff table $g_s$ with $s$ chosen by Bob in his first action.
\end{lem}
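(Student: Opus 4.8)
The lemma has two parts: the equality $\gvalue{\Omega_n(\mS)} = \min_{p}\gvalue{\Gamma_n^{\emptyset,\onefunction}(p)}$, which is essentially a reformulation, and the inequality $\vstrong(\mS,\emptyset,\onefunction)\ge\gvalue{\Omega_n(\mS)}$, which is the substance. For the inequality I would use a \emph{block strategy}. Note first that $\Omega_n(\mS)$ is a finite zero--sum game with perfect recall, so by Kuhn's theorem and von Neumann's minimax theorem it has a value $v_n:=\gvalue{\Omega_n(\mS)}$ and Alice has an optimal behavioral strategy $\alpha^*$ securing expected payoff at least $v_n$ against every strategy of Bob --- in particular against every strategy of the form ``Bob fixes the table to $g_s$ and then plays an arbitrary behavioral strategy in the $n$--stage game''. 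Now consider $\Gamma_N^{\emptyset,\onefunction}(p)$ for a large horizon $N$ (here $n$ is fixed), write $N=Bn+r$ with $0\le r<n$, let Alice partition the first $Bn$ stages into $B$ consecutive blocks of length $n$, and in each block let her play a fresh copy of $\alpha^*$ \emph{using only the history generated inside that block} (playing arbitrarily in the last $r$ stages). Resetting is crucial: Alice's play in block $j$ is measurable with respect to the within--block history alone.

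The key estimate is a conditional lower bound. Fix any strategy of Bob; let $\mathcal F_{j}$ be the $\sigma$--field generated by the state $S$ together with all actions up to the end of block $j$, and let $X_j$ be the average of the $n$ stage payoffs inside block $j$. I claim that $\ev{X_j\mid\mathcal F_{j-1}}\ge v_n$ almost surely. To see this, condition on $\mathcal F_{j-1}$, i.e.\ on the realized state $S=s$ and the pre--block history $h$. Given $(s,h)$, Bob's play inside block $j$ is some behavioral strategy $\beta_{s,h}$ for the $n$--stage game while Alice plays $\alpha^*$ on the within--block history, so the conditional expectation of $X_j$ equals the expected payoff in $\Omega_n(\mS)$ when Bob uses the degenerate strategy ``pick the table $g_s$ with probability one, then play $\beta_{s,h}$'' and Alice uses $\alpha^*$; by optimality of $\alpha^*$ this is at least $v_n$. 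Since this holds for every $(s,h)$, the claim follows --- Bob's freedom to adapt across blocks is harmless because the bound is uniform over $\mathcal F_{j-1}$.

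Given the claim, the conclusion is a routine concentration argument. Let $M:=\max_{s,a,b}|g_s(a,b)|<\infty$, so $X_j\in[-M,M]$ and $D_j:=X_j-\ev{X_j\mid\mathcal F_{j-1}}$ is a martingale difference sequence with $|D_j|\le 2M$; by Azuma--Hoeffding, $\pr{\tfrac1B\sum_{j=1}^B D_j<-\delta}\le\exp(-B\delta^2/(8M^2))$ for every $\delta>0$. Since $\tfrac1B\sum_j X_j\ge v_n+\tfrac1B\sum_j D_j$, we get $\pr{\tfrac1B\sum_j X_j<v_n-\delta}\le\exp(-B\delta^2/(8M^2))$. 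Finally $\sigma_N$ and $\tfrac1B\sum_j X_j$ differ by at most $2nM/N\to0$ (from the $r<n$ leftover stages and the factor $nB/N\le1$), so for all large $N$, $\pr{\sigma_N<v_n-\delta}\le\exp(-B\delta^2/(8M^2))\to0$ as $N\to\infty$. Hence Alice strongly guarantees $v_n-\delta$ for every $\delta>0$, i.e.\ $\vstrong(\mS,\emptyset,\onefunction)\ge v_n$.

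For the equality: Bob's move of choosing the table, after mixing, is a distribution $p$ on $\mS$, and conditioned on Bob committing to $p$ the residual game is exactly $\Gamma_n^{\emptyset,\onefunction}(p)$ (state $\sim p$, Alice uninformed, Bob informed), whose value is $\gvalue{\Gamma_n^{\emptyset,\onefunction}(p)}$; Bob's optimal response is thus $\min_p\gvalue{\Gamma_n^{\emptyset,\onefunction}(p)}$, the minimum being attained by compactness of the simplex and continuity of the value of a finite game in its (affine--in--$p$) payoffs, and the exchange of Alice's $\max$ with this $\min$ being justified by Sion's minimax theorem (the payoff is linear in $p$ and concave in Alice's mixed strategy). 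I expect the main obstacle to be the conditional lower bound $\ev{X_j\mid\mathcal F_{j-1}}\ge v_n$: one must arrange the blocking so that Alice's within--block play depends only on the within--block history, and then recognize Bob's residual within--block behavior --- together with the pinned state $s$ --- as a legitimate strategy in $\Omega_n(\mS)$, so that optimality of $\alpha^*$ can be invoked uniformly over all possible pasts.
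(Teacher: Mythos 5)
Your proposal is correct and follows essentially the same route as the paper: play the optimal $\Omega_n(\mS)$ strategy afresh in consecutive length-$n$ blocks, reduce the conditional block payoff to the optimality of that strategy against Bob's "pinned state plus residual behavioral strategy," apply Azuma to the resulting (sub)martingale, absorb the leftover $r<n$ stages, and obtain the equality by splitting Bob's behavioral strategy in $\Omega_n$ into his state-choice distribution $p$ and his subsequent informed play. The only cosmetic difference is that the paper passes through an $m$-fold repetition of $\Omega_n$ (where Bob may re-choose the state each block) and then restricts, whereas you condition on the single nature-chosen state directly; both arguments are sound.
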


\begin{proof}[Proof of Lemma \ref{prop:lower-bound-sasl}] 
Note that ${\Omega}_n(\mS)$ is a repeated zero--sum game with 
perfect recall, so using Kuhn's Theorem, we may consider behavioral 
strategies in a Nash equilibrium of this game.

Assume $v = \gvalue{{\Omega}_n(\mS)}$ is the value of 
${\Omega}_n(\mS)$ and $\tilde{\alpha}$ be an equilibrium strategy for 
Alice. This means that for all strategy $\tilde{\beta}$ for Bob, the expected 
value of Alice by playing $\tilde{\alpha}$ is at least $v$.

Now, we repeat game ${\Omega}_n(\mS)$, $m$ times. Hence, we have a 
game of size $mn$ with $m$ blocks of length $n$. At the beginning of each block, a new value for $s$ (a new payoff 
table) is chosen by Bob and the game of length $n$ is played. We call the state 
of block $i$ as $S_i$ and actions of this block by 
$a^i_{[n]}$ and $b^i_{[n]}$ for Alice and Bob, respectively. Here $a^i_j$ for $i\in[m], j\in[n]$ is the $j$-th action of Alice in the block $i$.

Assume Alice plays strategy $\tilde{\alpha}$ in an i.i.d.\ fashion in each block, which means 
that she plays action $a^i_j$ at block $i$ with probability
\begin{equation}
 \alpha\left (a^i_j| a^{[i-1]}_{[n]} a^i_{[j-1]} 
b^{[i-1]}_{[n]} b^i_{[j-1]}\right ) = \tilde{\alpha} \left (a^i_j | 
 a^i_{[j-1]} b^i_{[j-1]} \right).
\end{equation}
Now we claim that playing this strategy by Alice results in guaranteeing $v - 
\epsilon$ with high probability for her when $m$ is large enough. For doing so, assume that Bob plays an 
arbitrary strategy in the game with length $mn$. More precisely he chooses 
state $s_i$ for block $i$ with probability
\begin{equation}
 \beta\left (s_i | s_{[i-1]} a^{[i-1]}_{[n]} b^{[i-1]}_{[n]} \right),
\end{equation}
and action $b^i_j$ with probability
\begin{equation}
 \beta\left (b^i_j | s_{[i]} a^{[i-1]}_{[n]} a^i_{[j-1]} 
b^{[i-1]}_{[n]} b^i_{[j-1]} \right ).
\end{equation}
Now define the random variable $W_k$ to be 
\begin{equation}
 W_k = \sum_{i=1}^k \sum_{j=1}^n g_{S_i} (A^i_j, B^i_j) - n k v,
\end{equation}
which is the sum of the payoffs of Alice in 
the first $k$ blocks, centered by the expected payoff. Now we claim that $W_k$ 
is submartingale with respect to $A^{[k]}_{[n]}, B^{[k]}_{[n]}, 
S_{[k]}$. Note that 
\begin{equation}
\label{eq:w-k+1-w-k-submartingale}
 \ev{W_{k+1} \bigg | A^{[k]}_{[n]}, B^{[k]}_{[n]}, 
S_{[k]}} = W_k + \ev{\sum_{j=1}^n g_{S_{k+1}} (A^{k+1}_j, B^{k+1}_j) \Bigg | 
A^{[k]}_{[n]}, B^{[k]}_{[n]}, 
S_{[k]}} - nv.
\end{equation}
Now we claim that
\begin{equation}
 \ev{\sum_{j=1}^n g_{S_{k+1}} (A^{k+1}_j, B^{k+1}_j) \Bigg | 
A^{[k]}_{[n]}, B^{[k]}_{[n]}, 
S_{[k]}} \geq nv.
\end{equation}
It suffices to show that for any realization of the history, $s_{[k]}, 
a^{[k]}_{[n]} b^{[k]}_{[n]}$, the expected value is at least $nv$. To show this, note 
that for this specific realization of the history, the term inside the 
expectation is the sum of Alice's payoff in a game ${\Omega}_n$ 
where Alice uses equilibrium strategy $\tilde{\alpha}$ and Bob uses strategy
\begin{equation}
 \tilde{\beta} (s_{k+1}) = \beta\left (s_{k+1} \bigg | s_{[k]} 
a^{[k]}_{[n]} b^{[k]}_{[n]} \right ),
\end{equation}
and
\begin{equation}
\tilde{\beta} (b^{k+1}_j | s_{k+1}, a^{k+1}_{[j-1]} b^{k+1}_{[j-1]}) = 
\beta \left (b^{k+1}_j \bigg | s_{[k+1]} a^{[k]}_{[n]} a^{k+1}_{[j-1]} 
b^{[k]}_{[n]} b^{k+1}_{[j-1]}  \right ).
\end{equation}
Since $\tilde{\alpha}$ is an equilibrium strategy, for all strategy of Bob 
including the above $\tilde{\beta}$ in block $k+1$ the expected value of 
Alice's payoff is at least the value of the game. Hence
\begin{equation}
 \ev{\sum_{j=1}^n g_{S_{k+1}} (A^{k+1}_j, B^{k+1}_j) \Bigg | 
a^{[k]}_{[n]}, b^{[k]}_{[n]}, 
s_{[k]}} \geq nv \qquad \forall a^{[k]}_{[n]}, b^{[k]}_{[n]}, 
s_{[k]}.
\end{equation}
Therefore
\begin{equation}
 \ev{\sum_{j=1}^n g_{S_{k+1}} (A^{k+1}_j, B^{k+1}_j) \Bigg | 
A^{[k]}_{[n]}, B^{[k]}_{[n]},
S_{[k]}} \geq nv,
\end{equation}
Substituting this into \eqref{eq:w-k+1-w-k-submartingale} 
shows that $W_k$ is a submartingale.

Note that
\begin{equation}
 | W_{k+1} - W_k | = | \sum_{j=1}^n g_{S_{k+1}}(A^{k+1}_j, B^{k+1}_j) - nv| 
\leq 2nM,
\end{equation}
where $M$ is an upper bound on payoffs. Now using Azuma's inequality with $W_0 
= 0$ we have
\begin{equation}
 \pr{W_m  < -t} \leq \exp \left( \frac{-t^2}{2m(2Mn)^2} \right).
\end{equation}
Setting $t = m^\delta$ for a $1/2 < \delta < 1$, the above bound goes to zero 
with $m$ going to infinity. Therefore for $m$ large enough, with high 
probability we have $W_m \geq - m^\delta$ or equivalently
\begin{equation}
 \sum_{k=1}^m \sum_{j=1}^n g_{S_k}(A^k_j, B^k_j) \geq nmv - m^\delta,
\end{equation}
or
\begin{equation}
 \frac{1}{nm} \sum_{k=1}^m \sum_{j=1}^n g_{S_k}(A^k_j, B^k_j) \geq v - 
\frac{m^{\delta-1}}{n} \geq v - \epsilon,
\end{equation}
where the last inequality holds with high probability for $m$ large enough. Therefore, Alice can 
guarantee payoff $v$ with high probability for the game with the game 
${\Omega}_n$ repeated $m$ times by playing $\tilde{\alpha}$ 
i.i.d.

Next, observe that playing the same strategy by Alice can guarantee her payoff 
$v-\epsilon$ for game ${\Omega}_{nm}$ for large enough $m$. The reason is that Bob's strategies in ${\Omega}_{nm}$ is a subset of Bob's strategies 
in the $m$ repetition of ${\Omega}_n$, as in the former Bob chooses $s$ once at the beginning while in the latter, he is allowed to choose it at the beginning of each of the $m$ blocks. Finally, observe that Alice can guarantee payoffs arbitrarily close to
$v$ for game ${\Omega}_{k}$, as long as $k$ is large enough, even when $k$ is not of the product form $nm$ for some $m$. Let $k=mn+r$ for some $0 \leq r 
< n$.  Alice can play the above good strategy in stages $1$ through $mn$ and 
plays arbitrarily in stage $mn+1$ through $nm+r$. Then Alice's gain in ${\Omega}_{nm+r}$ 
would be with high probability at least
\begin{equation}
 \frac{mn}{mn+r} \left( v - \epsilon \right) - 
\frac{rM}{mn + r},
\end{equation}
where $M$ is an upper bound on the gains. The above value is greater than 
$v - 2 \epsilon$ for $m$ large enough.

To sum this up, we have shown that there is a strategy for Alice (namely, i.i.d.\ $\tilde{\alpha}$)  that guarantees payoff $v$ for Alice, regardless of Bob's 
strategy. This implies that 
\begin{equation}
 \vstrong \geq v=\gvalue{{\Omega}_n}.
\end{equation}
which is the first part of our claim in equation \eqref{eqn22par1}.

Now using minimax expression for the Nash equilibrium we have
\begin{equation}
\begin{split}
 \gvalue{{\Omega}_n} &= \min_{\tilde{\beta}} 
\max_{\tilde{\alpha}} 
\frac{1}{n} \sum_{i=1}^n \ev{g_S(A_i,B_i)}\\
&= \min_{q(s)} \min_{\beta(b_i | s, a_{[i-1]}, b_{[i-1]})} 
\max_{\alpha(a_i |  a_{[i-1]}, b_{[i-1]})} \frac{1}{n} \sum_{i=1}^n
\ev{g_S(A_i,B_i)}\\
&= \min_{q(s)} \gvalue{\Gamma^{\emptyset, \onefunction}_n(q(s))}	,
\end{split}
\end{equation}
where in the second equality we have split Bob's (behavioral) strategy $\tilde{\beta}$ into two parts: first choosing the state, and then playing actions based on the chosen state and history of the game.
This completes the proof.
\end{proof}

\begin{proof}[Proof of Proposition \ref{prop:lower-bound-final}]
We have,
\begin{equation*}
  \begin{split}
    \vstrong(\mS, \emptyset, \onefunction) &\stackrel{(a)}{\geq} \min_{p(s)} \gvalue{\Gamma_n^{\emptyset, \onefunction}(p)} \\
    &\stackrel{(b)}{\geq} \min_{p(s)} \left ( \vex \osv{p} - \frac{C}{\sqrt{n}} \right ) \\
    & \stackrel{(c)}{\geq} \left ( \min_{p(s)}  \vex \osv{p} \right ) - \frac{C}{\sqrt{n}}\\
    & \stackrel{(d)}{=} \left ( \min_{p(s)}  \osv{p} \right ) - \frac{C}{\sqrt{n}},\\
  \end{split}
\end{equation*}
where $(a)$ uses Proposition~\ref{prop:lower-bound-sasl} (which holds for all values of $n$), $(b)$ uses Theorem~\ref{thm:zamir-side-information-on-one-side}, $(c)$ uses the fact that the constant $C$ is independent of $p$ and $(d)$ uses the fact that the minimum of the convex hull of the function is the same as the minimum of the function itself.

Since this holds for all values of $n$, the result is proved simply by sending $n$ to infinity.
\end{proof}

\begin{proof}[Proof of Proposition \ref{prop:uppernound}]
From equation \eqref{eqn:tbto1}, we have that 
\begin{equation}
  \label{eq:Bob-one-to-empty-upper-bound}
  \vweak(\mS, \emptyset, \onefunction) = \vweak(\mS, \emptyset, \emptyset).
\end{equation}
For any distribution $p(s)$, using Lemma~\ref{lem:support-invariant} and Remark~\ref{rem:vweak-vstrong-set-notation}, we have $\vweak(\mS, \emptyset, \emptyset) \leq \vweak(p(s), \emptyset, \emptyset)$, since $\support p \subset \mS$. Now we claim that $\vweak(p, \emptyset, \emptyset) \leq \osv{p}$. In order to do so, assume $v$ is a value that Alice can weakly guarantee when the state is generated from distribution $p$, $\mathscr T_A = \emptyset$ and $\mathscr T_B = \emptyset$. Therefore, due to the definition, for any $\epsilon>0$, with $n$ large enough, for any strategy $\beta_n$ for Bob in $\Gamma_n^{\emptyset, \emptyset}(p)$, there exists an strategy $\alpha_n$ for Alice such that $\pr{\sigma_n < v} < \epsilon$. Assume Bob plays the equilibrium strategy of $\osv{p}$, iid in $n$ games. Then since initially neither Alice nor Bob have any side information about the state, they do not gain any extra information by observing each other's strategies. Now, looking at the game at stage $k$, since Bob is playing his equilibrium strategy,  $\ev{g_S(A_k, B_k)} \leq \osv{p}$. Hence, 
\begin{equation*}
  \ev{\sigma_n} = \ev{\frac{1}{n} \sum_{i=1}^n g_S(A_i, B_i)} \leq \osv{p}.
\end{equation*}
On the other hand, $\pr{\sigma_n < v} < \epsilon$ implies $\ev{\sigma_n} \geq v(1-\epsilon)$. This together with the above inequality we have $v(1-\epsilon) \leq \osv{p}$. Since $\epsilon$ was arbitrary, $v \leq \osv{p}$ and thus $\vweak(\mS, \emptyset, \emptyset) \leq \osv{p}$. Since $p$ was arbitrary, by taking minimum over $p$ we get
\begin{equation*}
  \vweak(\mS, \emptyset, \emptyset) \leq \min_{p(s)} \osv{p}.
\end{equation*}
Substituting this into \eqref{eq:Bob-one-to-empty-upper-bound} finishes the proof.
\end{proof}

\section{An application}\label{compoundAVC}
In this section, we provide an application of the high probability framework. This section assumes a background in information theory. 
Consider an AVC channel with a legitimate sender/receiver and also an adversary. 
Assume that the channel has a state $S$ which is partially known to the 
encoder/decoder and the adversary (imperfect CSI). Communication channel is a conditional probability 
distribution $p(y|x,a,s)$ where $x$ is the encoder's input on the channel, $a$ 
is adversary's input on the channel, $s$ is the channel state and $y$ is 
the output at the decoder. We assume that $X, Y, A$ and 
$S$ take values in finite sets $\mX, \mY, \mA$ and $\mS$, respectively. We 
assume that the state $S$ is chosen from a distribution $p_S$. Encoder and 
decoder  both have the same side information $S_X$ about $S$, while the 
adversary has 
a side information $S_A$ about it. We assume that the channel state is chosen 
once and for 
all and remains unchanged during the consecutive uses of the channel (slow fading).  However, 
we assume that the channel noise in $p(y|x,a,s)$ is independent in different channel uses, \emph{i.e.,} $p(y_{[n]}|x_{[n]}, a_{[n]}, s)=\prod_{i=1}^np(y_i|x_i,a_i,s)$. Furthermore, as before without loss of generality we assume that $S_X$ and $S_A$ are functions of $S$, \emph{i.e.,}\ $S_X = \mathscr T_X(S)$ and 
$S_A = \mathscr T_A(S)$. We assume that $p(s_X)>0$ for all $s_X$.

Adversary observes the history of the game at any stage $i$, 
\emph{i.e.,} inputs put on the channel by the encoder $X_{[i-1]}$. Likewise, we assume 
that both the encoder and decoder observe adversary's input on the channel 
$A_{[i-1]}$. Therefore, this is a communication problem with feedback.

We assume that encoder and decoder have access to unlimited private shared
randomness, unknown to the adversary, allowing them to use randomized 
algorithms. A $(n, 2^{nR})$ code consists of strategies for encoding as 
well as strategies for decoding. The encoder wants to reliably send a message $M$ in 
$\{1, \dots, 2^{nR} \}$ 
via $n$ uses of the channel, while the adversary wants to prevent this from 
happening. 
More specifically, at stage $i$, the encoder creates input $X_i$ using the 
message $M$, its side information $S_X$, its shared randomness $K$, as well as 
$X_{[i-1]}, A_{[i-1]}$ previous transmissions by himself and the adversary. 
Therefore 
the encoder's strategy is to assign a probability to each symbol in $\mX$ given the history of 
the game.  Hence, 
$\alpha(x_i | x_{[i-1]}, a_{[i-1]}, s_X,m, k)$ which is 
the encoding strategy, determines the probability of encoder generating.
Adversary has also a strategy, which we denote by the conditional pmf $\beta(a_i | x_{[i-1]}, a_{[i-1]}, 
s_A, k_A)$ where $k_A$ denotes private randomness of adversary; it determines the probability of choosing $a_i$ as the input of the 
adversary, the history of the game and adversary's side information.  

At the decoder side, we find an $\hat{M}$ given $S_X, Y_{[n]}, A_{[n]}, K$; thus we are assuming that receiver observes $Y_{[n]}$ as well as adversary's inputs to the channel. The side information at the decoder is assumed to be $S_X$ which is the same as the one at the encoder. 
A rate $R$ is called achievable if for $\epsilon > 0$, there is some $N_0$ such that for any $n>N_0$, we can design encoding / decoding strategies such that independent of 
adversary's strategy, the probability of error, \emph{i.e.,}~$\pr{M \neq \hat{M}}$ is 
smaller than $\epsilon$. The supremum over all the achievable rates is called 
the capacity of the channel and is denoted by $C$. Our goal is to find $C$.
Figure~\ref{fig:avc-model} depicts our channel model.
Following the common assumption in the game theory literature, we 
assume that both encoder/decoder and adversary know each other's strategies. 
As in a repeated game with incomplete information, there is a tradeoff for both encoder and adversary to use or hide their side 
information about the channel state. 
\begin{figure}
 \centering
 \begin{tikzpicture}
 \tikzstyle{box}=[rounded corners,fill=blue!20,draw=blue!50,very thick]
 \tikzstyle{bigarrow}=[decoration={markings,mark=at position 0.999 with
{\arrow[scale=2]{stealth}}}, postaction={decorate}, shorten >=0.4pt]
 \draw[box] (-1,-0.5) rectangle (1,0.5);
 \node at (0,0) {\footnotesize$p(y|x,{\color{red} a},{\color{blue} s})$};
 \draw[bigarrow] (-2,0) -- (-1,0);
 \draw[bigarrow] (1,0) -- (4,0);
 
 \draw[dashed, rounded corners] (-5,-1.7) rectangle (2.5,3);
 
 \draw (3.1,1) -- (3.6,1) -- (3.6,0.3);
 \draw[bigarrow] (3.6,0.3) -- (4,0.3);
 \node[above] at (3.1,1) {$Y_1$};
 
 \draw (3.1,-1) -- (3.6,-1) -- (3.6,-0.3);
 \draw[bigarrow] (3.6,-0.3) -- (4,-0.3);
 \node[below] at (3.1,-1) {$Y_n$};
 
 \node at (3.1,0.65) {$\vdots$};
 \node at (3.1,-0.5) {$\vdots$};
 
 \node at (-1.5,0.3) {$X_i$};
 \node at (1.5,0.3) {$Y_i$};
 \draw[box] (-4,-0.5) rectangle (-2,0.5);
 \node at (-3,0) {Encoder};
 \draw[box] (4,-0.5) rectangle (6,0.5);
 \node at (5,0) {Decoder};
 \draw[bigarrow] (-6,0) -- (-4,0);
 \node at (-5.5,0.3) {$M$};
 \draw[bigarrow] (6,0) -- (7,0);
 \node at (6.5,0.3) {$\hat{M}$};
 
 \draw[bigarrow] (0,1.5) -- (0,0.5);
 \draw[box] (-1,1.5) rectangle (1,2.5);
 \node at (0,2) {{\color{red} Adversary}};
 \node at (0.3,1) {{\color{red} $A_i$}};
 
 \draw[bigarrow] (5,-1) -- (5,-0.5);
 \node at (5,-1.3) {{\color{blue} $S_X$}};
 
 \draw[bigarrow] (-3,-1) -- (-3,-0.5);
 \node at (-3,-1.3) {{\color{blue} $S_X$}};
 
 \draw[bigarrow] (-3,1) -- (-3,0.5);
 \node at (-3,1.3) {$A_{[i-1]}$};
 
 \draw[bigarrow] (5,1) -- (5,0.5);
 \node at (5,1.3) {$A_{[n]}$};

 \draw[bigarrow] (-1.5,2) -- (-1,2);
 \node at (-1.8,2.3) {$X_{[i-1]}$};
 
 \draw[bigarrow] (1.5,2) -- (1,2);
 \node at (1.8, 2.3) {{\color{blue} $S_A$}};
 
 
 
 \draw[green!70!blue,bigarrow] (5.5,-2.5) -- (5.5,-0.5);
 \draw[green!70!blue,bigarrow] (-3.5,-2.5) -- (-3.5,-0.5);
 \draw[green!70!blue] (-3.5,-2.5) -- (5.5,-2.5);
 \node[below, green!70!blue] at (1,-2.5) {Shared Randomness $K$};
\end{tikzpicture}
\caption{\label{fig:avc-model} A compound AVC channel.}
\end{figure}
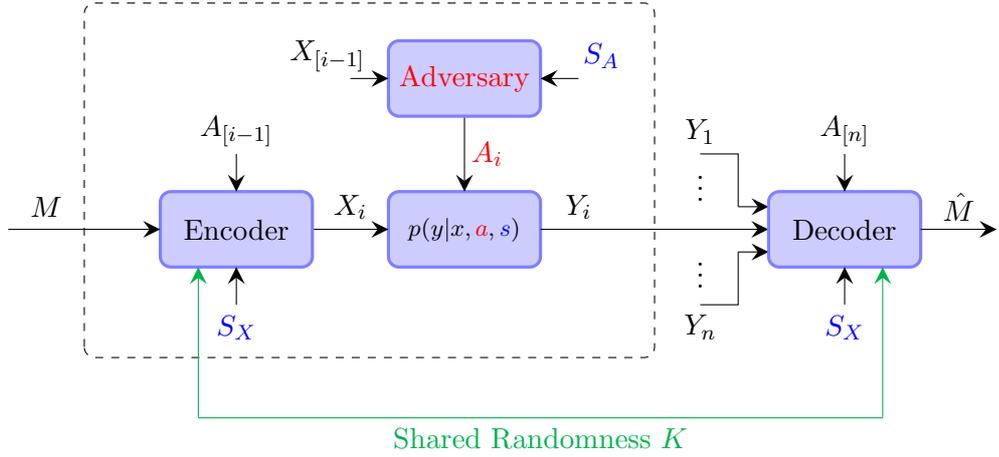

\begin{thm}
 For the Compound-AVC problem described above, the capacity is
 \begin{equation}
 \label{eq:capacity}
  C = \min_{s_X} \min_{p_S: \support(p_S) \subseteq \mathscr T_X^{-1}(s_X)}\min_{p(a)} \max_{p(x)} I(X;YA|S),
 \end{equation}
where $p(x,y,a,s)=p(s)p(a)p(x)p(y|a,s,x)$.
\end{thm}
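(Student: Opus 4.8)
The plan is to recognize the Compound-AVC problem as a repeated zero--sum game of incomplete information in the sense of Section~\ref{sec:game-definition-problem-statement}, with the encoder--decoder team in the role of Alice (the maximizer) and the adversary in the role of Bob (the minimizer), side information functions $\mathscr T_X$ and $\mathscr T_A$, respectively, and then to evaluate $\vstrong=\vweak$ for this game by Theorem~\ref{thm:main-two-sided}. To make the stage game finite I would discretize the input distributions: Alice's action is a point $\hat p(x)$ of a fine grid on the simplex over $\mX$, Bob's action is a point $\hat p(a)$ of a fine grid over $\mA$, and the payoff table at state $s$ is $g_s(\hat p(x),\hat p(a)) = I(X;YA|S=s)$ evaluated under $\hat p(x)\hat p(a)p(y|x,a,s)$; by uniform continuity of mutual information the discretization error vanishes as the mesh shrinks. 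With this stage game $\Gamma$, Definition~\ref{def:u-average-game} gives $\osv{p_S}=\min_{p(a)}\max_{p(x)} I(X;YA|S)$ (a genuine minimax, since $I(X;YA|S)$ is concave in $p(x)$ for fixed $p(a)$ and convex in $p(a)$ for fixed $p(x)$). A ``stage'' of the repeated game corresponds to a long block of channel uses in which Alice and Bob draw their channel inputs i.i.d.\ from their chosen types; the causal feedback of $A_{[i-1]}$ to the encoder/decoder and of $X_{[i-1]}$ to the adversary becomes, at block granularity, exactly the ``each player observes the other's past actions'' structure of the repeated game, and the shared randomness $K$ realizes Alice's behavioral (private-coin) strategy. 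Once this dictionary is set up, Theorem~\ref{thm:main-two-sided} immediately identifies $\vstrong=\vweak=\min_{s_X}\min_{p_S:\support(p_S)\subseteq\mathscr T_X^{-1}(s_X)}\osv{p_S}$, which unwinds to the right-hand side of~\eqref{eq:capacity}; it then remains to show $C=\vstrong=\vweak$.

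\textbf{Achievability ($C\ge\vstrong$).} Fix $v<\vstrong$. By Theorem~\ref{thm:main-two-sided} (equivalently, by the construction behind Proposition~\ref{prop:lower-bound-final} and the martingale/Azuma argument of Lemma~\ref{prop:lower-bound-sasl}), Alice has a \emph{universal} strategy so that, against every adversary, for $n$ large the game payoff $\sigma_n$ (the average per-block conditional mutual information) exceeds $v$ with probability at least $1-\epsilon$. On this high-probability event the total mutual-information budget is at least $nv$, so I would run a random-coding argument in which $K$ selects a codebook adapted to the (random, adversary-dependent) sequence of input types prescribed by Alice's game strategy, and the decoder performs maximum-mutual-information decoding from $(Y_{[n]},A_{[n]},S_X)$, universal over the states consistent with $S_X$. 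This yields reliable transmission at rate $v-\epsilon'$ conditioned on the good event and an overall error at most $\epsilon+\epsilon'$; letting the mesh, $\epsilon$, and $\epsilon'$ tend to $0$ gives $C\ge\vstrong$.

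\textbf{Converse ($C\le\vweak$).} Suppose rate $R$ is achievable. Fix $s_X$ and a distribution $\bar p_S$ with $\support(\bar p_S)\subseteq\mathscr T_X^{-1}(s_X)$. Mimicking the reduction in Lemma~\ref{lem:reduction-one-side-information}: the adversary first ``guesses'' that the state equals the worst element of $\mathscr T_X^{-1}(s_X)$ --- correct with a constant probability, and $\vstrong,\vweak$ depend on the state prior only through its support by Lemma~\ref{lem:support-invariant} --- and then plays the equilibrium input distribution of the one-shot game $\osv{\bar p_S}$ i.i.d. Since no party has nontrivial side information in the reduced problem and the adversary is now memoryless, Fano's inequality together with single-letterization --- conditioned on the positive-probability event $\{S_X=s_X\}$ and on the adversary's guess being correct --- forces $R\le\osv{\bar p_S}$, exactly mirroring Proposition~\ref{prop:uppernound}. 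Minimizing over $s_X$ and over $\bar p_S$ gives $C\le\min_{s_X}\min_{p_S:\support(p_S)\subseteq\mathscr T_X^{-1}(s_X)}\osv{p_S}=\vweak$. Combining with the achievability bound and $\vstrong=\vweak$ proves~\eqref{eq:capacity}.

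\textbf{Main obstacle.} The delicate part is the faithful two-way dictionary between the channel-coding problem and the repeated game, and in particular the achievability step, where the purely information-theoretic ``high-probability guarantee on $\sigma_n$'' must be turned into a genuine vanishing probability of decoding error. This needs a random-coding/universal-decoding argument in which both the codebook (through the shared randomness $K$) and the adversary adapt over blocks, while the decoder is universal over the unknown true state and exploits the observed adversary inputs $A_{[n]}$; controlling the discretization of input distributions and the block-boundary effects simultaneously is where the real work lies. The converse, by contrast, is a fairly direct transcription of Lemmas~\ref{lem:support-invariant}--\ref{lem:reduction-one-side-information} and Proposition~\ref{prop:uppernound} into Fano-inequality language.
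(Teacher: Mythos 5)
Your overall architecture --- recast the channel problem as a repeated game with incomplete information, evaluate $\vstrong=\vweak$ via Theorem~\ref{thm:main-two-sided} after discretizing the input simplex, prove achievability by random coding driven by the game strategy and the converse by letting the adversary play i.i.d.\ --- is the paper's architecture, and your converse is essentially the paper's (Section~\ref{converse}): an i.i.d.\ adversary reduces the problem to a classical memoryless compound channel, whose known capacity plus the minimax theorem gives the bound; the extra ``guess the state'' step you import from Lemma~\ref{lem:reduction-one-side-information} is harmless but unnecessary there, since the compound-channel capacity already carries the minimum over states consistent with $s_X$.

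The genuine gap is in your dictionary for achievability. You take the adversary's action in the auxiliary stage game to be a \emph{distribution} $\hat p(a)$ and declare one game stage to be a long block of channel uses in which both parties ``draw their inputs i.i.d.\ from their chosen types.'' But the actual adversary is not bound by this protocol: it chooses a symbol $a_i$ at every single channel use, adaptively as a function of $X_{[i-1]}$, i.e.\ of the transmitted codeword and hence of the message. Consequently the channel inputs restricted to any sub-block are \emph{not} conditionally i.i.d.\ given the adversary's observed behavior, the within-block joint typicality your random-coding/universal-decoding step needs does not follow from a law-of-large-numbers argument, and the reduction of ``all adversary strategies'' to ``block-i.i.d.\ type choices'' is precisely the step that must be justified; you flag this in your ``main obstacle'' paragraph, but the block-granularity dictionary makes it harder rather than easier. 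The paper sidesteps it by letting the adversary's game action be the raw symbol $a\in\mA$ (one game stage per channel use, payoff $I(\pi;Y|a,s)$, so only the encoder's actions are distributions), generating the codebook column-by-column from the shared randomness, decoding by joint typicality on the index sets $\tau(\pi,a)$ intersected with a plausible-state list $\hat{\mS}(\pi_{[n]},a_{[n]})$, and --- crucially --- proving concentration of the empirical counts $N_n(a,\pi,x,y)$ by an explicit martingale/Azuma argument in Section~\ref{achieve}, introduced exactly because the adversary's adaptivity rules out an LLN argument; the second error event is then handled by the packing lemma against the $2^{nR}$ competing codewords. Without an analogue of that concentration step and a concrete universal decoder, your claim $C\ge\vstrong$ is not established.
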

The mutual information is between the input $X$, and $(Y, A)$ which are observed by the receiver. Observe that the expression does not depend on $S_A$.

\subsection{Converse}\label{converse}
For proving the converse, assume that the adversary puts its inputs i.i.d., from an 
input distribution $p(a)$ independent  of all its observations and its side information about the state. Then for a fixed 
value of state $s$, we have a point to point channel with input $X$ and output 
$YA$. 
The encoder receives the side information $S_X = s_X$; no further information about $S$ is revealed to him 
during the transmission, since adversary's input is independent of the state. 
Therefore, with the observation $S_X=s_X$ at the encoder and decoder, we have a classical memoryless compound channel with input $X$, output $YA$ and state $S$ with the conditional pmf $p(s|S_X=s)$. The capacity of this compound channel is \cite[Theorem 7.1]{ElGamalKim}
\begin{equation}
 \max_{p(x)} \min_{s: \mathscr T_X(s) = s_X} I(X;YA|S=s).
\end{equation}
Therefore
\begin{equation}
\begin{split}
 R &\leq \max_{p(x)} \min_{s: \mathscr T_X(s) = s_X} I(X;YA|S=s) \\
 &= \max_{p(x)} \min_{p_S: \support(p_S) \subseteq \mathscr T_X^{-1}(s_X)} I(X;YA{|S}) \\
 &\stackrel{(a)} = \min_{p_S: \support(p_S) \subseteq \mathscr T_X^{-1}(s_X)}\max_{p(x)} I(X;YA{|S}) \\
 \end{split}
\end{equation}
where $(a)$ results from the minimax theorem and the fact that 
$I(X;YA{|S})$ is 
concave in $p(x)$ and convex in $p(s)$. Since the above holds for all $s_X$ 
and also for all $p(s)$, 
\begin{equation}
 R \leq  \min_{p(a)}\min_{s_X} \min_{p_S: \support(p_S) \subseteq \mathscr T_X^{-1}(s_X)} \max_{p(x)} I(X;YA|S),
\end{equation}
since $p(S_X=s_X)>0$ for all $s_X$. This completes the proof of the converse.

\subsection{Achievability}\label{achieve}

\subsubsection{An auxiliary game}
Before specifying the encoder and decoder, we define an auxiliary repeated game with incomplete 
information  as follows: take $\mP$ to be a finite subset of the probability simplex 
$\Delta(\mX)$ over the input 
alphabet $\mX$. The game has two players: encoder/decoder (which we call 
encoder for the sake of simplicity) and adversary. The action set of encoder is $\mP$ and the action set of adversary is $\mA$. The one stage game 
has $| \mS |$ tables for each state of the channel. In payoff table 
corresponding to  $s \in \mS$, when encoder chooses action $\pi \in \mP$ and 
adversary chooses action $a \in 
\mA$, payoff $I(X;Y|S=s, A=a)$ for $p(x,y|s,a)=\pi(x)p(y|x,s,a)$
is assigned to the encoder (and its negative is assigned to the adversary). In the following, instead of writing $I(X;Y|s, a)$, we use $I(\pi;Y|s,a)$ in order 
to emphasize the dependence on $\pi$. Then this game 
is repeated $n$ times, and the total payoff function of encoder would be the sum of its individual payoffs from the $n$ games.  Further, we assume that the encoder and adversary receive $S_X$ and $S_A$ as 
their side information at the beginning of the game. We call this game $\Gamma_n$.

\subsubsection{From the auxiliary game to the compound-AVC problem}
Assume  that $\vstrong$ is the maximum value 
encoder can guarantee with high probability in the auxiliary game $\Gamma_n$. We claim that any rate $R < 
\vstrong$ is achievable for the original compound-AVC problem. Take some $\tilde R$ such that $R < \tilde R < \vstrong$. Assume the strategy of encoder for strongly guaranteeing $\tilde R$ is $p^E$. Thus, 
$p^E(\pi_i | s_X, a_{[i-1]}, \pi_{[i-1]})$ denotes the probability the 
encoder chooses distribution $\pi_i$ at stage $i$ given his observations up to 
that time. Adopting $p^E$,  the gain of the encoder in $\Gamma_n$ is at 
least $\tilde R$  with high probability when $n$ is large enough.

\textbf{Codebook generation:}
A codebook of $2^{nR}$ codewords of length $n$ can be illustrated by a table 
of size $2^{nR}\times n$ where row index indicates the message and columns 
indicate time steps. Encoder and decoder \emph{dynamically} construct the $2^{nR}\times n$ table, 
column by column, during the transmission process by running the auxiliary game in parallel. In other words, the column 
$i$ of the codebook (which is needed to make the $i$-th transmission) is created 
after time step $i-1$ as follows: the symbols in the $i$-th column of the codebook table are generated 
independently from 
distribution $\pi_i$ of the auxiliary game (\emph{i.e.,} $2^{nR}$ i.i.d. samples from 
$\pi_i$ 
are generated and put in the $i$-th column of the table). Note that since encoder 
and decoder have infinite shared 
randomness, they can use their shared randomness to simultaneously generate 
the codebook (\emph{i.e.,} the randomness needed to draw samples from $n$ i.i.d. samples 
from $\pi_i$ comes from the shared randomness between the encoder and decoder). 
The encoder and decoder are synchronized as the decoder observes 
$a_{[i-1]}$ and knows $S_X$.

\textbf{Encoding:} Having message $m$, the encoder sends the symbols from the $m$-th row of the codebook table that is being dynamically constructed during the transmission process. To write down the joint pmf that this encoding strategy implies, let $p^A$ denote
adversary's strategy in the compound-AVC problem, \emph{i.e.,} let $p^A(a_i | s_A, a_{[i-1]}, 
x_{[i-1]}, k_A)$ be the probability that adversary chooses $a_i$ at stage $i$ 
where $x_i$ is encoder's input on the channel at stage $i$ and $k_A$ is adversary's private randomness. Then, the 
joint distribution of variables in the problem when the state of the channel is 
$s$ and the message $m$ is
\begin{align*}
 p(s_X,s_A|s)p(k_A) &\prod_{i=1}^n p^E(\pi_i | s_X, \pi_{[i-1]}, a_{[i-1]}) 
p^A(a_i | s_A, a_{[i-1]}, x_{[i-1]}(m), k_A) 
\\&\times \left ( \prod_{j=1}^{2^{nR}} 
\pi_i(x_i(j)) \right ) p(y_i | x_i(m), s, a_i)
\end{align*}

\textbf{Decoding:}
The decoder has access to $a_{[n]}$, $y_{[n]}$. Also note that $\pi_i$ is generated 
from the strategy $p^E$, $S_X$, $\pi_{[i-1]}$ and $a_{[i-1]}$  which are 
all known to the decoder. Also as was mentioned above, since we use 
random strategies in the repeated game, $\pi_i$ is a random function of 
the observations. However, since encoder and decoder have access to 
shared randomness, they can use it to come up with the same $\pi_i$ and 
apply the strategy simultaneously.  Also since encoder and decoder have 
shared randomness, the decoder knows the codebook. For $\pi$ and $a$ in finite sets $\mathcal P$ and $\mathcal A$ respectively, define $\tau(\pi,a)$ to be the set 
of indexes $1 \leq i \leq n$ where encoder's distribution is $\pi$ and 
adversary's input is $a$, \emph{i.e.,}
\begin{equation}
 \tau(\pi,a) := \{ 1 \leq i \leq n: \Pi_i = \pi, A_i = a \}.
\end{equation}
Then in the decoder, assume the sequence $y_{[n]}$ is received. The receiver 
declares that message $\hat{m}$ has been sent if
\begin{equation}
\label{eq:decoding-condition}
 (x_{\tau(\pi,a)}(\hat{m}),y_{\tau(\pi,a)}) \in 
\mT_\epsilon^{n_{\pi,a}}(X,Y|a,s) \quad \forall 
\pi,a: n_{\pi,a} \geq n^{3/4} \quad  \text{for some } s \in 
\hat{\mS}(\pi_{[n]}, 
a_{[n]}),
\end{equation}
where $n_{\pi,a} = |\tau(\pi,a)|$ is the number of indexes $i$ where $\Pi_i = 
\pi$ and $A_i = a$; the set $\mT_\epsilon^{n_{\pi,a}}(X,Y|a,s)$ includes jointly typical sequences from $\mathcal{X}$ and $\mathcal{Y}$ of length $n_{\pi,a}$ according to $p(x,y|a,s)=\pi(x)p(y|x,a,s)$; and finally 
\begin{equation}
 \hat{\mS}(\pi_{[n]}, a_{[n]}) = \left \{ s \in \mS: \frac{1}{n} \sum_{i=1}^n 
I(\pi_i;Y|a_i,s) \geq \tilde R \right \},
\end{equation}

\textbf{Analysis of Error:}
Because the codebook is constructed symmetrically, without loss of 
generality we assume that $M=1$. We have two types of errors, the first one 
denoted by $\mE_1$ happens when $\hat{m} = 1$ does not satisfy 
\eqref{eq:decoding-condition} and $\mE_2$ happens when for some $\hat{m} \neq 
1$, \eqref{eq:decoding-condition} is satisfied.

For analyzing the first error, assume that $S = s^*$ has happened. First note 
that since encoder's strategy guarantees $\tilde R$, we have
\begin{equation}
\pr{ \frac{1}{n} \sum_{i=1}^n I(\Pi_i;Y|A_i,S) \geq \tilde R } \geq 
1-\epsilon,
\end{equation}
and hence
$
 \pr{S \in \hat{\mS}(\Pi_{[n]}, A_{[n]}) } \geq 1 - \epsilon.
$
So we can assume that
$
 s^* \in \hat{\mS}(\pi_{[n]}, a_{[n]}).
$
Hence, in order to show that $\hat{m} = 1$ satisfies 
\eqref{eq:decoding-condition}, we shall show that with high probability
\begin{equation}
\label{eq:error-1-typicallity}
(X_{\tau(\pi,a)}(1),Y_{\tau(\pi,a)})\in 
\mT_{\epsilon}^{n_{\pi,a}}(X,Y|a,s^*), \qquad \forall \pi, a: n_{\pi,a} \geq 
n^{3/4},
\end{equation}
In the above expression 
$s^*$ is the real state of the channel and $s_X^*$ and $s_A^*$ be the side informations. Iin the remaining we condition 
everything 
on $S=s^*$ and at times, we do not state this explicitly in our expressions for the sake 
of simplicity. 
Note that since adversary's input at stage $i$, $A_i$ is dependent on 
$X_{[i-1]}$, then we can not say that $X_{\tau(\pi,a)}$ are i.i.d.\ from 
distribution $\pi$. For instance, if $A_i = X_{i-1}$, then conditioned on our 
observations on adversary, the distribution on input is changed. Hence, we can 
not employ standard LLN type argument to show that the first error type 
vanishes. Instead, define
\begin{equation}
 W_i = N_i(a,\pi,x,y) - N_i(a,\pi) \pi(x) p(y|x,a,s^*) \qquad 1 \leq i \leq n,
\end{equation}
where
\begin{equation}
 N_i(a,\pi,x,y) = | \{ j \leq i: A_j = a, \Pi_j = \pi, X_j = x, Y_j = y \}|,
\end{equation}
is the number of times $a, \pi, x, y$ has happened up to stage $i$. 
Note that in the above definition, $a,\pi,x,y$ are fixed values and not 
random quantities.
Similarly
\begin{equation}
 N_i(a,\pi) = | \{ j \leq i: A_j = a, \Pi_j = \pi \} |.
\end{equation}
Also define $W_0 = 0$.
Now we claim that $W_i$ is a martingale with respect to $H_{[i]} := A_{[i]} \Pi_{[i]} X_{[i]} 
Y_{[i]}K_A$ which is the history of the events up to stage $i$. To see that note,
\begin{equation}
 \begin{split}
  \ev{W_{i+1} | H_{[i]}} &= W_i + \ev{\one{A_{i+1} = a, \Pi_{i+1} = \pi, X_{i+1} = 
x, Y_{i+1} = y} \big| H_{[i]}} \\
&\qquad - \ev{\one{A_{i+1} = a, \Pi_{i+1} = \pi} \pi(x) p(y|x,a,s^*) \big| H_{[i]}} \\
&= W_i + p^A(a|s_A^*, A_{[i]}, X_{[i]}, K_A) p^E(\pi |s^*_X, A_{[i]}, \Pi_{[i]}) \pi(x) p(y|x,a,s^*) \\
& \qquad - \pi(x) p(y|x,a,s^*) p^A(a|s^*_A, A_{[i]}, X_{[i]}, K_A) p^E(\pi |s^*_X, A_{[i]}, \Pi_{[i]}) \\
&= W_i,
 \end{split}
\end{equation}
where in the second equality we have used the fact that the expected value of 
an indicator function is the probability of its corresponding event. Hence, as 
was claimed, $W_i$ is a martingale. Also note that 
\begin{equation}
 |W_{i+1} - W_i| = \Big | \one{A_i=a, \Pi_i = \pi, X_i = x, Y_i = y} - 
\one{A_i = a, \Pi_i = \pi} \pi(x) p(y|a,x,s) \Big | \leq 1.
\end{equation}
Therefore using Azuma's inequality for $t = n^{3/4} \epsilon$, we have
\begin{equation}
 \pr{|W_n| \geq n^{3/4} \epsilon} \leq 2 \exp \left (-\frac{n^{3/2} 
\epsilon^2}{2n} \right ),
\end{equation}
which goes to zero as $n$ goes to infinity. Hence, for $n$ large enough with 
high probability we have
\begin{equation}
 \left | N_n(a,\pi,x,y) - N_n(a,\pi) \pi(x) p(y|x,a,s) \right | \leq 
n^{3/4} \epsilon,
\end{equation}
This statement is true for all $a,\pi, x, y$ which form a finite set. 
Therefore, we can take $n$ large enough so that the above expression is true 
wit high probability for all values of $a,\pi,x,y$. Now, if $N_n(\pi,a) \geq 
n^{3/4}$ we have
\begin{equation}
 \left | \frac{N_n(a,\pi,x,y)}{N_n(a,\pi)} - \pi(x) p(y|x,a,s) \right | \leq 
\frac{n^{3/4}\epsilon}{N_n(a,\pi)} \leq \epsilon,
\end{equation}
which shows that \eqref{eq:error-1-typicallity} is satisfied and the first type 
error vanishes as $n$ goes to infinity.

Now we analyze the second type of error. We condition the second error on 
$\Pi_{[n]} = \pi_{[n]}$ and $A_{[n]} = a_{[n]},Y_{[n]} = y_{[n]}$. Define $\mE_2(\hat{m})$ for 
$\hat{m}$ to be the event where $\hat{m}$ satisfies 
\eqref{eq:decoding-condition}. Note that since the adversary does not observe 
$X_{[n]}(\hat{m})$ for $\hat{m} \geq 1$, unlike the first type of error, it can not 
establish correlation between them. Thus, conditioned on $\pi_{[n]}$, $X_{[n]}(\hat{m})$ are independent and 
$X_i(\hat{m})$ is generated from $\pi_i$. Therefore for $s \in \hat{\mS}(\pi_{[n]}, 
a_{[n]})$, we can use packing lemma \cite[Lemma 3.1]{ElGamalKim}. Using the independence 
among blocks $\tau(\pi,a)$, for some $\pi,a$ where $n_{\pi,a} \geq n^{3/4}$ 
we have
\begin{equation}
 \pr{(X_{\tau(\pi,a)},Y_{\tau(\pi,a)}) \in \mT_\epsilon^{n_{\pi,a}} (X,Y|a,s) 
} \leq 2^{-n_{\pi,a} ( I(\pi,Y|a,s) - \delta(\epsilon))},
\end{equation}
for some $\delta(\epsilon)$ that converges to zero as $\epsilon$ converges to zero. Now using the independence of the above events, we have
\begin{align*}
 - \log \pr{(X_{\tau(\pi,a)},Y_{\tau(\pi,a)}) \in \mT_\epsilon^{n_{\pi,a}} 
(X,Y|a,s) ~~~\forall \pi,a: n_{\pi,a} \geq n^{3/4} } \\\geq \sum_{\pi, a: 
n_{\pi,a} \geq n^{3/4}} n_{\pi,a} (I(\pi;Y|a,s) - \delta(\epsilon)).
\end{align*}
Now since the mutual information is bounded and the terms corresponding to 
those $\pi_i, a_i$ that do not appear in the above expression have length 
less than $n^{3/4}$, and the set of possible $\pi,a$ is finite, there is a 
bounded constant $\bar M$ such that
\begin{equation}
\begin{split}
 \sum_{\pi, a: 
n_{\pi,a} \geq n^{3/4}} n_{\pi,a} (I(\pi;Y|a,s) - \delta(\epsilon)) &\geq 
\sum_i I(\pi_i;Y|a_i,s) - n \delta(\epsilon) - \bar M n^{3/4} \\
&\geq n(\tilde R - \delta(\epsilon) - \bar M n^{-1/4} ),
\end{split}
\end{equation}
where the last inequality uses the assumption $s \in \hat{\mS}(\pi_{[n]}, a_{[n]})$. 
Therefore using union bound
\begin{equation}
\label{eq:whysetv}
 \pr{\mE_2} \leq 2^{nR} 2^{-n(\tilde R - \delta(\epsilon) - \bar M n^{-1/4} )} = 2^{-n(\tilde R - 
R - \delta(\epsilon) - \bar M n^{-1/4})},
\end{equation}
the above value goes to zero as $n$ goes to infinity by appropriate choice of 
$\epsilon$ since $\tilde R > R$.

Hence we have proved that any rate below $\vstrong$ is achievable. 

\subsubsection{Computing $\vstrong$ for the auxiliary game}
In the rest of the proof, we use Theorem \ref{thm:main-two-sided} to find the value of 
$\vstrong$. We need to first find 
$\osv{p}$, which is the game value for the average payoff table 
$
 \sum_s p(s) I(\pi;Y|a,s)
$. 
Thus,
\begin{equation}
 \osv{p} = \min_{p(a)} \max_{\pi \in \mP} \sum_s p(s)p(a) I(\pi;Y|a,s).
\end{equation}
Writing $X$ instead of its distribution $\pi$ for convenience we have
\begin{equation}
 \osv{p(s)} = \min_{p(a)} \max_{p(x)\in \mP} I(X;Y|AS),
\end{equation}
where the joint distribution of the variables is $p(s) p(x)p(a) p(y|x,a,s)$. 
Since $X,A$ and $S$ are independent, we have
\begin{equation}
 \osv{p(s)} = \min_{p(a)} \max_{p(x)\in \mP} I(X;YA|S).
\end{equation}
Using Theorem \ref{thm:main-two-sided}, we have
\begin{equation}
 \vstrong = \min_{s_A} \min_{p_S: \support(p_S) \subseteq \mathscr T_A^{-1}(s_A)} \min_{p(a)} \max_{p(x)\in \mP} 
I(X;YA|S).
\end{equation}

In the above argument, the set $\mP$ is a finite and arbitrary subset of 
distributions on $\mX$. Now the only thing which remains to show is that by 
appropriate choice of finite set $\mP$ we can get arbitrarily close to the 
target value in  \eqref{eq:capacity}. In order to do so, define function $f$ as 
\begin{equation}
 f(p(x),p(a),p(s)) = I(X;YA|S),
\end{equation}
where the joint distribution is $p(s)p(x)p(a)p(y|xas)$. This function is 
continuous on the product of compact spaces which is compact itself. Therefore, 
$f$ is uniformly continuous. Hence, since the set of distributions on $\mX$ is 
compact, for every given, $\epsilon > 0$, there is a finite covering $\mP$ of 
$\Delta(\mX)$ where for all $p(x) \in \Delta(\mX)$, there exists $\tilde{p}(x) \in \mP$ such that for all $
p(s),p(a)$
\begin{equation}
| f(p(s), p(x), p(a)) - f(p(s), \tilde{p}(x),p(a)) | \leq 
\epsilon.
\end{equation}
Therefore by 
appropriate choice of finite set $\mP$ we can get within any $\epsilon$ to the 
target value in  \eqref{eq:capacity}.

\end{document}